\newcommand{\inlineeqnum}{\refstepcounter{equation}~~\mbox{(\theequation)}}
\newtheorem{lemma}{Lemma}
\newtheorem{proposition}{Proposition}
\newtheorem{theorem}{Theorem}
\newtheorem{sublemma}{Lemma}[lemma]
\newcounter{rqcounter}
\newcommand*\circled[1]{\tikz[baseline=(char.base)]
  {\node[shape=circle,fill,inner sep=1.5pt]
   (char) {\textcolor{white}{#1}};}}
\NewDocumentCommand{\rot}{O{45} O{1em} m}{%
  \makebox[#2][l]{\rotatebox{#1}{#3}}%
}
\DeclareAcronym{pbs}{
  short = PBS,
  long = Proposer Builder Separation
}
\DeclareAcronym{epbs}{
  short = ePBS,
  long = enshrined Proposer Builder Separation
}
\DeclareAcronym{mev}{
  short = MEV,
  long = Maximal Extractable Value
}
\DeclareAcronym{defi}{
  short = DeFi,
  long = decentralized finance
}
\DeclareAcronym{tee}{
  short = TEE,
  long = Trusted Execution Environments
}
\DeclareAcronym{dex}{
  short = DEX,
  long = decentralized exchanges
}
\DeclareAcronym{abm}{
  short = ABM,
  long = agent-based modelling
}
\DeclareAcronym{pow}{
  short = PoW,
  long = Proof-of-Work
}
\DeclareAcronym{pos}{
  short = PoS,
  long = Proof-of-Stake
}
\DeclareAcronym{bne}{
  short = BNE,
  long = Bayesian Nash Equilibrium
}
  \def\corref#1{}%
  \def\cortext#1#2{}%
  \def\@corref#1{}%
  \def\cnotenum#1{}%
\begin{document}

\begin{frontmatter}

\title{Enshrined Proposer Builder Separation in the presence of Maximal Extractable Value}

\author[a]{Yitian Wang}
\ead{tammy.wang.20@ucl.ac.uk}

\author[b]{Yebo Feng\corref{cor1}}
\ead{yebo.feng@ntu.edu.sg}

\author[c]{Yingjiu Li}
\ead{yingjiul@uoregon.edu}

\author[a]{Jiahua Xu}
\ead{jiahua.xu@ucl.ac.uk}

\affiliation[a]{
  organization = {Department of Computer Science, University College London},
  country      = {United Kingdom}
}

\affiliation[b]{
  organization = {College of Computing and Data Science, Nanyang Technological University},
  country      = {Singapore}
}

\affiliation[c]{
  organization = {Computer Science Department, University of Oregon},
  country      = {United States}
}

\cortext[cor1]{Corresponding author}

\begin{abstract}

In blockchain systems operating under the \acf{pos} consensus mechanism, fairness in transaction processing is essential to preserving decentralization and maintaining user trust. However, with the emergence of \acf{mev}, concerns about economic centralization and content manipulation have intensified. To address these vulnerabilities, the Ethereum community has introduced \acf{pbs}, which separates block construction from block proposal. Later, \acf{epbs} was also proposed in EIP-7732, which embeds \acs{pbs} directly into the Ethereum consensus layer. 

Our work identifies key limitations of \acs{epbs} by developing a formal framework that combines mathematical analysis and agent-based simulations to evaluate its auction-based block-building mechanism, with particular emphasis on \acs{mev} dynamics. Our results reveal that, although \acs{epbs} redistributes responsibilities between builders and proposers, it significantly amplifies profit and content centralization: the Gini coefficient for profits rises from 0.1749 under standard \acs{pos} without \acs{epbs} to 0.8358 under \acs{epbs}. This sharp increase indicates that a small number of efficient builders capture most value via \acs{mev}-driven auctions. Moreover, 95.4\% of the block value is rewarded to proposers in \acs{epbs}, revealing a strong economic bias despite their limited role in block assembly. These findings highlight that \acs{epbs} exacerbates incentives for builders to adopt aggressive \acs{mev} strategies, suggesting the need for future research into mechanism designs that better balance decentralization, fairness, and \acs{mev} mitigation.

\end{abstract}

\begin{keyword}
Blockchain, Security, Consensus Mechanism, MEV, ePBS.
\end{keyword}

\end{frontmatter}

\section{Introduction}
\label{sec:intro}

Ethereum's transition to Proof-of-Stake (PoS) represents a major shift in its consensus mechanism, but has not eliminated economic vulnerabilities. A critical concern arises from \acf{mev}, the value extractable through the inclusion, exclusion, or reordering of transactions. \ac{mev} creates strong incentives for manipulation, and evidence shows that between late 2023 and early 2024, just three builders produced nearly 80\% of all Ethereum blocks \cite{Daian2019, Oz2024WhoWhy}. Such concentration threatens decentralization, fairness, and credibility of the consensus protocol.

To address these issues, \acf{pbs} was proposed by Buterin \cite{ethpbs}. By isolating the roles of proposers and builders, \ac{pbs} seeks to reduce validators' discretion over transaction ordering and promote a more balanced market. Its protocol-native variant, \acf{epbs}, has since been drafted in EIP-7732 \cite{eip7732}, aiming to eliminate dependence on third-party relays and mitigate censorship concerns \cite{Neuder2023_epbs}. The Ethereum community is currently debating \ac{epbs} as a fundamental redesign of the consensus layer \cite{eip7732discussion}.

Yet the core question remains unresolved: \emph{does \ac{epbs} actually mitigate centralization and manipulation, and/or does it introduce new risks}? Existing studies \cite{Heimbach2023, Wahrstatter2023, Bahrani2024} primarily focus on empirical observations of relay-based \ac{pbs} and do not provide a formal framework to analyze the system-level incentives under \ac{epbs}. As a result, the long-term implications of embedding \ac{pbs} into Ethereum's protocol remain unclear.

In this work, we conduct a comprehensive study of \ac{epbs}, combining formal modeling with mathematical analysis and agent-based simulations. Our framework represents the monetary utility of users, builders, and proposers, enabling us to evaluate how \ac{mev}-driven auctions' equilibrium outcomes. We focus on two research questions:  

\textbf{RQ1: What is the effect of \ac{epbs} on fairness and decentralization?} We examine (1) how block construction decentralization differs between \ac{pos} and \ac{epbs}, (2) how the profit distribution differs between builders and validators under the two systems, and (3) how profit restaking and accumulation affect long-term concentration.  

\textbf{RQ2: What is the effect of \ac{epbs} on transaction manipulation?} We investigate whether the auction-based design reduces transaction reordering, or instead amplifies \ac{mev}-driven incentives that harm user fairness.

\paragraph*{Our Contributions}  
This paper makes the following contributions:
\begin{enumerate}
    \item We develop a formal framework of users, builders, and proposers in \ac{epbs}, and use it to analyze the single block auction dynamics and long-term effects. Our theoretical results show that \ac{mev}-seeking builders consistently dominate the auctions, leading to highly unequal profit distributions and long-term centralization effects through re-staking.  
    \item We implement calibrated agent-based simulations using Ethereum on-chain data, demonstrating that while proposer revenue becomes more decentralized under \ac{epbs}, builders capture an overwhelming share of \ac{mev}, amplifying market concentration and transaction reordering compared to \ac{pos}.  
    \item We identify a fundamental limitation of \ac{epbs}: although it achieves decentralization at the proposer level, it intensifies centralization among builders and worsens \ac{mev}-driven manipulation, raising concerns for Ethereum's long-term fairness and security.  
\end{enumerate}

\section{Background}
\label{sec:background}

In this section, we elaborate on the background knowledge related to \ac{epbs}, including its consensus setting, protocol details, and prior studies.

\begin{figure*}[tb]
    \centering
    \begin{subfigure}[t]{0.48\textwidth}
        \centering
        \includegraphics[height=7cm, keepaspectratio, trim=30 20 30 10]{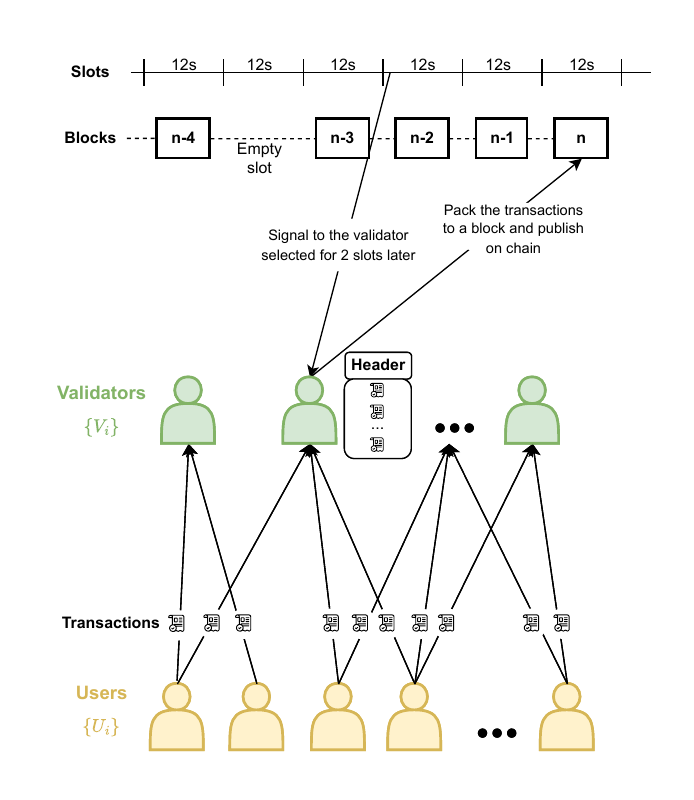}
        \caption{PoS.}
        \label{fig:pos_workflow}
    \end{subfigure}
    \hfill
    \begin{subfigure}[t]{0.48\textwidth}
        \centering
        \includegraphics[height=7cm, keepaspectratio, trim=30 20 30 10]{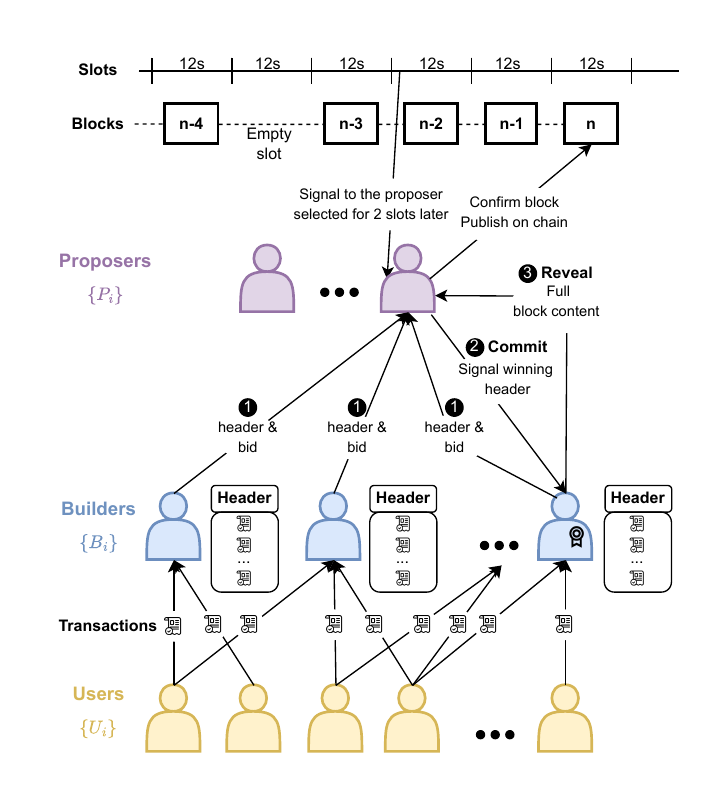}
        \caption{ePBS.}
        \label{fig:pbs_workflow}
    \end{subfigure}
    \caption{Agent workflow where each arrow represents message propagation, described with more details in \autoref{sec:pbs_mech}.}

\end{figure*}

\subsection{Current Ethereum Consensus Mechanism: PoS}
\label{sec:eth_roadmap}
Since Ethereum's consensus layer transitioned to \ac{pos} \cite{merge}, only nodes that have deposited 32 ETH are eligible to become validators. A validator is responsible for building a block during a specific time window called a slot---a fixed, continuous 12-second interval. Validators are randomly chosen two slots in advance from eligible participants to propose a block. If the assigned validator fails to propose a block, the slot remains empty, leading to a small percentage of skipped slots \cite{Beaconscan2025SkippedBlocks}. This process is illustrated in \autoref{fig:pos_workflow}. 

Ethereum's consensus mechanism faces challenges and concerns such as \ac{mev}, transaction censorship, and centralization \cite{Motepalli2025DecentralizationAdvancement}, which we further discuss below.

\subsection{PBS and ePBS}
\label{sec:pbs_mech}
\ac{pbs} is an Ethereum-based mechanism aiming to increase the decentralization of profit and block building \cite{pbsdef} by decoupling block construction from block proposal. In its current deployment of relay-based \ac{pbs} or MEV-Boost \cite{flashbot}, these roles interact through trusted off-chain relays that collect bids from builders and forward only the winning header to the proposer. This architecture employs an off-chain commit–reveal scheme, where builders first send a block header and bid to the relay, and later reveal the full block content once the proposer commits. While effective in preventing proposers from manipulating transactions, this design introduces new trust assumptions: relays may censor bids, fail to deliver payloads, or extract informational rents, and such behavior is unverifiable at the protocol level \cite{Heimbach2023, Wahrstatter2023}. 

\ac{epbs}, specified in EIP-7732~\cite{eip7732}, integrates the proposer–builder interaction directly into Ethereum's consensus layer, removing the need for off-chain relays. Bids, commitments, and payments are exchanged through protocol-defined messages that are verifiable on chain, guaranteeing proposer payment and preventing builders from withholding payloads or altering bids after commitment. By decoupling header attestation from full payload execution, \ac{epbs} improves liveness under network latency and enforces builder payments at the consensus level. In essence, \ac{epbs} replaces the economically trusted relay model of MEV-Boost with a protocol-native, cryptographically accountable mechanism while maintaining the same first-price auction structure~\cite{Neuder2023BidHarmful}.

A key feature of the \ac{pbs} and \ac{epbs} design is the commit–reveal scheme. As illustrated in \autoref{fig:pbs_workflow}, in step \circled{1}, builders transmit block headers and bids without revealing transaction order. In step \circled{2}, the proposer commits to a header. Finally, in step \circled{3}, the builder reveals the full block content. Once committed, neither the proposer nor the builder can alter or reorder transactions. Builders submit bids every 0.5 seconds, i.e., 24 rounds per slot \cite{2subsec}. Proposers are randomly selected per slot, consistent with validator selection in PoS. 

\subsection{Maximal Extractable Value (MEV)}
\label{sec:mev}
Ethereum's PoS design allows validators to determine which transactions to include and in what order, enabling the extraction of \ac{mev} through inclusion, exclusion, or reordering~\cite{ethmev}. When the \ac{mev} of a block exceeds the base block reward, validators may exclude transactions with MEV potential and replace them with their own strategically ordered ones~\cite{Daian2019}. 

Typical \ac{mev} strategies include front-running~\cite{Eskandari2019}, back-running~\cite{Werner2022SoK:DeFi}, sandwich attacks~\cite{Zust2021}, censorship and omission~\cite{Wahrstatter_Censorship}, and arbitrage~\cite{zhou_arbitrage}. While some forms, such as arbitrage, improve efficiency by aligning DEX prices~\cite{Makarov2019}, others (e.g., sandwiching) degrade user experience through slippage and increased confirmation time.

\subsection{Existing Literature}
\label{sec:lit}
Prior work has examined challenges in Ethereum consensus, PBS, and MEV. For PBS, Heimbach et al.~\cite{Heimbach2023} studied decentralization across relays; Gupta et al.~\cite{Gupta2023} modeled private order flow auctions, showing integrated builder–searchers gain persistent advantages; Bahrani et al.~\cite{Bahrani2024} provided a theoretical model of MEV-driven centralization; Yang et al.~\cite{Yang2024} conducted an empirical study of the builder market; Wang et al.~\cite{Wang2024} developed an asymmetric auction model; and Capponi et al.~\cite{Capponi2024} used game theory to show reduced validator centralization but heightened builder concentration. Improvement proposals include Buterin's partial block auctions~\cite{Buterin2022} and Babu's modified PBS model~\cite{Babu2022}, though both lacked validation via simulations.

Related work on MEV and transaction-level dynamics includes Grandjean et al.~\cite{Grandjean2023}, who identified decentralization deficiencies in Ethereum PoS, and Wahrstätter et al.~\cite{Wahrstatter2023}, who analyzed the Ethereum block construction market, revealing intricate \ac{mev} dynamics, with their follow-up~\cite{Wahrstatter_Censorship} showing censored transactions suffer an average delay of 85\%. Early studies by Daian et al.~\cite{Daian2019} and Eskandari et al.~\cite{Eskandari2019} formalized \ac{mev}, highlighting bot-driven strategies that distort fairness. Subsequent work quantified MEV, e.g., Qin et al.~\cite{Qin2021} and Weintraub et al.~\cite{Weintraub2022}, while Chitra et al.~\cite{Chitra2022} proposed redistribution mechanisms. Yang et al.~\cite{yang2022} catalogued countermeasures, and Jensen et al.~\cite{Jensen2023} studied multi-block MEV dynamics.

Overall, while both empirical and theoretical studies provide insight, they largely focus on relay-based PBS or isolated auction models. Little work addresses \ac{epbs} directly or provides system-level formalization combined with simulations. Our study fills this gap by modeling the strategic incentives of users, builders, and proposers under \ac{epbs}, and analyzing decentralization and MEV dynamics through both mathematical analysis and simulation.

\section{Formalization}
\label{sec:formalization}

In our \ac{epbs} model, we consider three classes of agents acting within a single block: users \{\(U_i\)\}, builders \{\(B_i\)\}, and proposers \{\(P_i\)\} also labeled in \autoref{fig:pbs_workflow}. The actions are defined in \autoref{sec:actions} and the utilities of each class of agent are formalized in \autoref{sec:utility}. The decision-making process for a block is illustrated in \autoref{fig:bid_rounds}. 
Our analysis focuses exclusively on monetary gains and losses, particularly those related to \ac{mev} extraction and block production rewards. All agent utilities are defined in terms of direct economic incentives measured. 
For the formalization of single-block dynamics, we drop the slot subscript $\ell$ for clarity, as all analysis focuses on actions within a single block.

\subsection{Assumptions}
\label{sec:assumptions}
\small
We list the following modeling assumptions:

\begin{enumerate}
    \item Participants are fully described by the attribute tuples provided in \autoref{sec:system-protocol}, without additional side information affecting their decisions.
    
    \item Latency in the network is modeled by an undirected weighted graph \(\mathcal{G} = (\mathcal{Z}, \mathcal{E})\). Each edge \( e_{uv} \in \mathcal{E}\) has weight \(w(e_{uv})\), indicates the link latency in the unit of rounds, and \(d(i,j) = d(j,i)\) represent the shortest distance between node \(z_i\) and node \(z_j\)~\cite{Jayabalasamy2024}. 

    Due to non-instantaneous propagation, both mempool contents and bid histories are affected by latency. 

    \begin{examplebox}{}{network}
    \small
    In the example \autoref{fig:network}, multiple paths between \( z_i \) and \( z_j \) exist: direct link  \( z_i \) and \( z_j \), via nodes \( z_k \), \( z_h \), or \( z_l \). The shortest latency \( d(i,j) \) corresponds to the minimum weight over all such paths: 
    \begin{align*}
        d(i,j) = \min\Big\{ &\textcolor{green!60!black}{w(e_{ik}) + w(e_{kj})},\, \textcolor{blue}{w(e_{ij})},
        \textcolor{red}{w(e_{ih}) + w(e_{hl}) + w(e_{lj})} \Big\}.
    \end{align*}

    \begin{figure}[H]
        \centering
        \includegraphics[width=0.3\linewidth,trim=20 15 20 30]{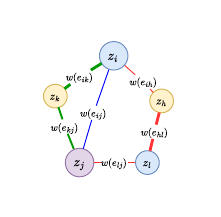}
        \caption{Example of a graph network with weight of edge representing latency.}
        \label{fig:network}
    \end{figure}
    \end{examplebox}

    \item Auctions proceed in discrete rounds with a fixed maximum round at \(T \leq 24\). This is an assumption for the simplification of the model, as builders are only allowed to submit a bid every 0.5 second, so 24 bids would be the maximum that any builder can submit.
    
    \item The cost of participating is negligible. If a transaction from the user is not included in the block or the block packed by the builder did not win the auction, no utility is deducted.

\end{enumerate}

\subsection{System Protocol}
\label{sec:system-protocol}

The protocol components are described as follows.

\begin{enumerate}

    \item Each validator \( V_i \) at round \(t\) has an attribute tuple \( (\tau_{V_i}, \mathcal{M}_{V_i, t}, \mathbf{x}_{V_i, t}) \), where \( \tau_{V_i} \in \{\mathtt{benign}, \mathtt{attack}\} \) indicates whether they behave honestly or attempt to exploit \ac{mev}, \( \mathcal{M}_{V_i, t} \) is the mempool of \( V_i \) at round \( t \), and \( \mathbf{x}_{V_i, t} \) is a list of transactions submitted by \( V_i \) up to round \( t \). The attribute \(\tau\) is explained in more detail in \autoref{sec:actions}.
    
    \item Each builder \( B_i \) at round \(t\) has an attribute tuple \( (\tau_{B_i}, \sigma_{B_i},  \mathcal{M}_{B_i, t}, \mathbf{x}_{B_i, t} )\), extending the validator attributes with \( \sigma_{B_i} \) (later referred to as \(\sigma_i\)) as their bidding strategy, which determines their bid in each auction round. 
    
    \item Each user \( U_i \) at round \(t\) has an attribute tuple \( (\tau_{U_i}, \mathcal{M}_{U_i, t}, \mathbf{x}_{U_i, t}) \), where \( \tau_{U_i} \in \{\mathtt{benign}, \mathtt{attack}\} \) indicates whether they behave honestly or attempt to exploit \ac{mev}, \(\mathbf{x}_{U_i, t}\) is the list of transactions submitted by the user \( U_i \) up to round \(t\), and \( \mathcal{M}_{U_i, t} \) is the mempool of \(U_i\) at round \(t\). Benign users submit standard transactions with no specific target, whereas attacking users strategically adjust transaction fees to manipulate execution order and maximize \ac{mev} extraction from target transactions.

    \item Each builder \(B_i\) has their private valuation \(v_{i,t}\) for the list of ordered transactions \(\mathcal{T}_{i,t}\) selected for round \(t\).
    
    \item The mempool \(\mathcal{M}_{i,t}\) observed by node \(i\) at round \(t\) includes only transactions that could have arrived within \(t\) rounds under the delay constraints in \(\mathcal{G}\), as illustrated in \autoref{fig:network}.
    
    \item Similarly, the bid history \(H_{i,t}\) available to builder \(B_i\) at round \(t\) consists of all bids that could have reached \(i\) by round \(t\):
    \(H_{i,t} = \bigcup_{j} \left\{ b_{j,k} \mid d(i,j) + k \leq t \right\}.
    \)
    This reflects that builders operate with delayed and incomplete information, both in terms of transactions and bids.

    \item For any transaction \(x_j \in \mathbf{x}_{U_i,t}\) created by user \(U_i\) regardless of its characteristics attribute \(\tau_{U_i}\), there is a chance that it would have \ac{mev} opportunities \(m_j \geq 0\) for others to attack and gain profit from.

    \item The proposers in \ac{epbs} are always $\mathtt{benign}$ unless vertically integrated with builders as determined by the mechanism of not able to see or reorder the transactions before committing to one block.
\end{enumerate}

\begin{table}[tb]
\scriptsize

\caption{Mathematical notations.}
\begin{tabular}{l p{0.8\columnwidth}}
\toprule
\textbf{Notation}  & \textbf{Definition}       \\
\midrule
\(U\)      & User                       \\
\(B\)      & Builder                     \\
\(P\)      & Proposer                 \\
\(V\)      & Validator                \\
\(b\)      & Bid in gwei, \(b \geq 0\)   \\
\(x\)      & Transaction              \\
\(g\)      & Gas fee in gwei, \(g \geq 0\)       \\
\(m\)      & MEV profit potential in gwei, \(m \geq 0\) (\(m_j\) is the profit an attacking participant can get if they successfully launched an attack targeting at \(x_j\))     \\
\(\mathbf{x}\)     & Set of unordered transactions    \\
\(\mathcal{T}\)    &  List of ordered transactions    \\
\(v\)      & Private valuation in gwei       \\
\(s\)      & Stake          \\
\(\ell\)   & Slot number   \\
\(H\)      & History of all observed bids       \\
\(\tau\)   & Type of agent        \\
\(\sigma\) & Bidding strategy          \\
\(\mathcal{A}\)    & One-to-one mapping of the attack transaction to its target transaction  \\
\(\mathcal{M}\)    & Mempool content \\
\(\mathcal{U}\)    & Utility function in gwei \\
\bottomrule
\end{tabular}
\end{table}

\begin{figure}[H]
    \centering
    \includegraphics[width=0.8\linewidth,trim=30 20 30 20]{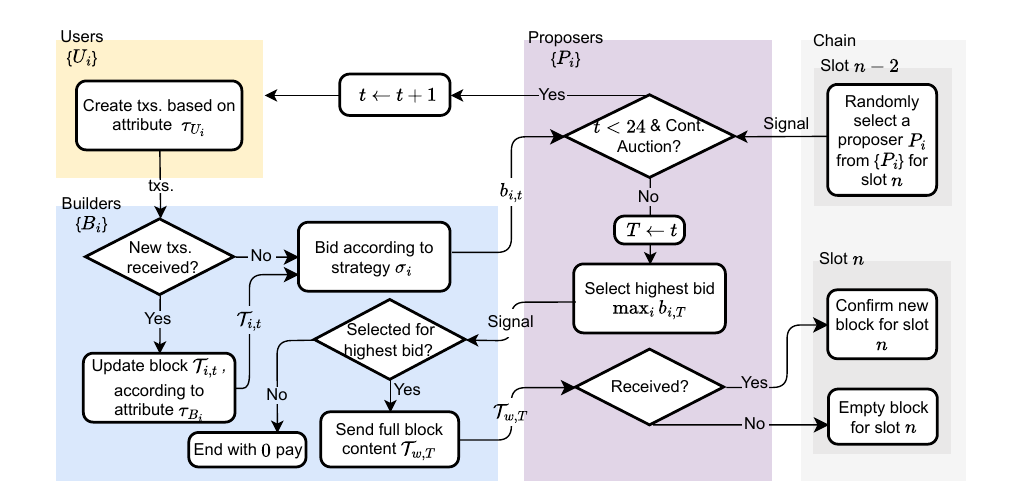}
    \caption{Decision cycle of ePBS.}
    \label{fig:bid_rounds}
\end{figure}

\subsection{Actions}
\label{sec:actions}

In \ac{epbs}, each agent type has a set of possible actions that they can take. Below, we outline the possible actions for users, builders, and proposers. For the formalization, we will focus on single block dynamics, thus all the slot subscripts $\ell$ are dropped.

\subsubsection{User Actions}
\paragraph*{Creating An Attack Transaction} If \(\tau_{U_i} = \mathtt{attack}\), user \(U_i\) initiates a transaction \(x_k\) targeting another transaction \(x_j = \mathcal{A}(x_k)\) where \(x_j \in  \mathcal{M}_{U_i, t}\) aimed at exploiting \ac{mev}, achieved by setting the gas fee \(g_k\) such that \(g_k \gtrless g_j\) for front and back running respectively. Note that this does not mean the attack is successful; the settlement happens when this transaction is confirmed on chain.

\paragraph*{Creating A Benign Transaction} If \(\tau_{U_i} = \mathtt{benign}\), user \(U_i\) initiates a transaction without targeting another transaction for \ac{mev} profits.

All user-initiated transactions can be attacked by other attacking users or builders if there is \ac{mev} potential available to be exploited.

\subsubsection{Builder Actions}

\paragraph*{Transaction Ordering}

Builder \(B_i\) selects an ordered list of transactions \(\mathcal{T}_{i,t} \subseteq \mathcal{M}_{B_i,t}\) to include in its block at round \(t\). If \( \tau_{B_i} = \mathtt{attack} \), the selection involves inserting a list of transactions \( \mathbf{x}_{B_i,t} \) that exploit \ac{mev} opportunities whenever profitable. Specifically, this follows the logic of inserting their own 0-gas attack transaction if the profit from the attack exceeds the total gas fees from any transactions displaced to accommodate the attack. Otherwise, the transaction is included as it is. If \( \tau_{B_i} = \mathtt{benign} \), the selection will be only to select transactions from the mempool and order them in descending order of gas fees. 

\begin{examplebox}{}{attack}
\small
Consider a block with capacity for three transactions. The mempool contains:
\begin{itemize}
    \item \(x_1\): transaction with gas fee \(g_1 = 30\) gwei
    \item \(x_2\): transaction with gas fee \(g_2 = 20\) gwei
    \item \(x_3\): transaction with gas fee \(g_3 = 15\) gwei
    \item \(x_4\): victim transaction with MEV opportunity \(m_4 = 50\) gwei, gas fee \(g_4 = 5\) gwei
\end{itemize}
A benign builder would select the top three highest-fee transactions:
\(
30 + 20 + 15 = 65 \text{ gwei}.
\)
In contrast, an attack builder considers inserting a 0-gas attack targeting \(x_4\), including \(x_4\), and one honest transaction (say \(x_1\)), displacing the others. The resulting block includes:
\begin{itemize}
    \item attack transaction: 0 gwei, gains MEV \(m_4 = 50\) gwei
    \item victim \(x_4\): 5 gwei
    \item \(x_1\): 30 gwei
\end{itemize}
with total block value
\(
50 + 5 + 30 = 85 \text{ gwei}.
\)
Since \(85 > 65\), the attack builder prefers to insert the attack. If instead the MEV were only \(m_4 = 10\) gwei, then
\(
10 + 5 + 30 = 45 \text{ gwei}
\)
would be worse than 65, so even the attack builder would not attempt it. This illustrates that the builder only inserts an attack if its MEV profit exceeds the total gas fees of the transactions displaced to accommodate it.

\end{examplebox}

\paragraph*{Bidding}
Builder \(B_i\) Submitting a bid \(b_{i,t }\) to the proposer according to the strategy \(\sigma_i (v_{i,t}, H_{i,t})\), where \(v_{i,t}\) is the valuation of the block at round \(t\).

\subsubsection{Proposer Actions}

\paragraph*{Auction Decision}
In the ideal case of \ac{epbs} where builders and proposers are separate nodes, the proposer proposes the highest bidding block. Proposer will stop the auction at any round \(T \leq 24\) and accept the block with the highest bid \( \max_{n} b_{n,T}\) at the stopping round \(T\).

\subsection{Utility Functions}
\label{sec:utility}

In this section, we define the utility functions for different agents participating in the \ac{epbs} system. These utility functions characterize the monetary economic incentives driving the actions of users, builders, and proposers. It is important to note that \ac{mev} is a zero-sum game, all gains for attackers result in losses for other participants.

\subsubsection{User Utility}
Let \( x_j \in \mathbf{x}_{U_i} \) be a transaction submitted by user \( U_i \), and let \( \mathcal{T}_{w,T} \) denote an ordered list of transactions in the final block selected at round \( T \) built by the winning builder \( B_w \). Let \( \mathcal{A}(x_k) \) denote the transaction targeted by the attacking transaction \( x_k \). Then, the utility of user \( U_i \) is given by

{\footnotesize
\begin{equation}
\label{eq:user_utility}
\mathcal{U}_U(U_i) =
\overbrace{\sum_{x_j \in (\mathbf{x}_{U_i} \cap \mathcal{T}_{w,T})}}^{\text{\scriptsize \shortstack{All user txs\\in final block}}}
\!\left[
\overbrace{\sum_{x_k \in \mathcal{T}_{w,T}}}^{\text{\scriptsize \shortstack{All txs\\in final block}}}
\!\left(
\overbrace{\mathbf{1}_{x_k = \mathcal{A}(x_j)} \cdot m_k}^{\text{\scriptsize \shortstack{Gain if $x_j$\\attacks $x_k$}}}
-
\overbrace{\mathbf{1}_{x_j = \mathcal{A}(x_k)} \cdot m_j}^{\text{\scriptsize \shortstack{Loss if $x_j$\\attacked}}}
\right)
-
\overbrace{g_j}^{\text{\scriptsize Gas fee}}
\right].
\end{equation}
}

Note that all utilities are defined on a per‐block basis.
This function captures the user's potential gain and loss due to \ac{mev}, reflecting the fact that \ac{mev} is a zero-sum game: any value extracted by one participant must have been taken from others within the system.

\subsubsection{Builder Utility}
\label{eq:builder_utility}

Let \( B_i \) be a builder participating in the auction at round \( t \), submitting bid \( b_{i,t} \), and let \( v_{i,t} \) be the corresponding block valuation. We distinguish between:

\paragraph*{Expected Utility (at round \( t \))}  
The builder's expected utility of the auction termination round $T$ at current round $t$ is: 
\begin{equation}
\mathbb{E}[\mathcal{U}_B(B_i, T)|t] = \sum_{z=t}^{24} [\mathbb{P}[T=z] \mathbb{P}[b_{i,z} = \max_n b_{n,z}] (v_{i,z} - b_{i,z})].
\label{eq:builder_expected_utility}
\end{equation}

\paragraph*{Realized Utility (at round \( T \))}  
Once the proposer ends the auction at round \( T \), the realized utility is  
\begin{equation}
\mathcal{U}_B(B_i, T) \;=\;
\begin{cases}
    v_{i,T} - b_{i,T}, & \text{if } b_{i,T} = \max_{n}\,b_{n,T},\\
    0, & \text{otherwise}.
\end{cases} 
\label{eq:builder_realized_utility}
\end{equation}

We assume that there are no 2 equal maximum bids due to the low probability.

\paragraph*{Block Valuation}
\label{sec:valuation}

The valuation of the block built by builder \( B_i \) at time \( t \) is given by
\begin{equation}
\label{eq:builder_valuation}
v_{i,t} =
\overbrace{\sum_{x_j \in \mathcal{T}_{i,t}}}^{\text{\scriptsize \shortstack{All txs\\in the block}}} 
\!\left[
\overbrace{\sum_{x_k \in (\mathcal{T}_{i,t} \cap \mathbf{x}_{B_i,t})}}^{\text{\scriptsize \shortstack{All builder\\initiated txs}}} 
\!\left(
\overbrace{\mathbf{1}_{x_j = \mathcal{A}(x_k)} \cdot m_j}^{\text{\scriptsize \shortstack{Gain if $x_j$\\attacked by $x_k$}}}
\right)
+
\overbrace{g_j}^{\text{\scriptsize Gas fee}}
\right],
\end{equation}
where \( \mathcal{T}_{i,t} \) is the set of transactions included by builder \( B_i \) in its block at round \( t \), and \( \mathbf{x}_{B_i,t} \) is the set of transactions created by builder \( B_i \) up to time \( t \). The second term aggregates the MEV value extracted by any attacking transaction submitted by \( B_i \) that is included in the final block.

\paragraph*{Builder's Bid}

The bid \(b_{i,t}\) submitted by builder \(B_i\) at round \(t\) is determined by the builder's strategy \(\sigma_i\),
$b_{i,t} \;=\; \sigma_i\bigl(v_{i,t}, H_{i,t}\bigr) \inlineeqnum \label{eq:builder_bid}$,
where \(v_{i,t}\) is the builder's block valuation at round \(t\), and \(H_{i,t}\) is the history of all bids that \(B_i\) observes at round \(t\).

\subsubsection{Proposer Utility}
Let \(P_i\) be the proposer that is selected to perform the block validation. If the proposer terminates the auction at round \(T\), then the proposer's utility is given by
\(\mathcal{U}_P(P_i, T) \;=\; \max_{n} b_{n,T} \inlineeqnum \label{eq:proposer_utility}\),
which reflects the direct payoff the proposer receives from selecting the highest bid at round \(T\), where the maximum is taken over all \(\{B_i\}\)'s bid.

\section{Mathematical Analysis}
\label{sec:maths}

In this section, we employ a mathematical analysis, using a formalized representation outlined in \autoref{sec:formalization}, to evaluate the performance of the \ac{epbs} system.

Due to the complexity of this system, we only conduct theoretical analysis on some extreme cases to gain insight into the range of results we should be able to observe. Later, we carried out experiments in \autoref{sec:results} with differing parameters to show the dynamics of the system.

\subsection{Bidding Strategy}
\label{sec:bidding_strategy}

For this section, we aim to find an optimal bidding strategy that maximizes a builder's utility, and the auction outcome.

\begin{lemma}[Dominated Bidding Strategies] 

\begin{sublemma}[Passive Participation]
\label{sublemma:passive}
    For any builder $B_i$ with private valuation $v_{i,t}$ at round $t$, any strategy resulting in passive (non-participation) ($b_{i,t} = 0$) is weakly dominated by competitive participation.
\end{sublemma}
\begin{proof}
    Let $b_{-i,t} = \max_{j \neq i} b_{j,t}$. Compare a passive strategy $\sigma_i^{passive}$ that sets $b_{i,t}^{passive} = 0$ with an active strategy $\sigma_i^{active}$ that bids $b_{i,t}^{active} = v_{i,t} - \epsilon$ for arbitrarily small $\epsilon > 0$.
    \begin{enumerate}
        \item If $b_{-i,t} < v_{i,t} - \epsilon$, then whenever the auction ends at round $t$, $\sigma_i^{active}$ wins with utility $\epsilon > 0$, while $\sigma_i^{passive}$ yields $0$. Thus the active strategy is strictly better in this case.
        
        \item If $b_{-i,t} \geq v_{i,t} - \epsilon$, both strategies lose at round $t$, giving $0$. However, in later rounds $z > t$, if the highest competing bid falls below $v_{i,z} - \epsilon$, the active strategy can still win with non-negative payoff, whereas the passive strategy continues to give zero. Thus $\sigma_i^{active}$ weakly dominates across all possible termination rounds.
    \end{enumerate}
    Therefore, passive non-participation is weakly dominated by competitive bidding.
\end{proof}

\begin{sublemma}[Overbidding]
\label{sublemma:overbidding}
    For any builder $B_i$ with private valuation $v_{i,t}$ at round $t$, the strategy $b_{i,t} > v_{i,t}$ is weakly dominated by $b_{i,t} = v_{i,t}$.
\end{sublemma} 

\begin{proof}
    Suppose builder $B_i$ adopts an overbidding strategy $\sigma_i^{over}$ that sets $b_{i,t}^{over} > v_{i,t}$, compared to the valuation strategy $\sigma_i^{val}$ with $b_{i,t}^{val} = v_{i,t}$. Let $b_{-i,t} = \max_{j \neq i} b_{j,t}$ denote the highest competing bid.

    An overbidding builder may justify $\sigma_i^{over}$ by reasoning that the probability of the auction ending at round $t$ is small, and that in some later round $z > t$ they could still lower their bid to a feasible level $b_{i,z} \le v_{i,z}$ and win with positive utility. Two cases would result from this strategy:

    \begin{enumerate}
        \item Early termination: Since the proposer may end the auction at any round, an overbidder risks the auction stopping earlier than anticipated. In that case, their expected utility contribution is $(v_{i,t} - b_{i,t}^{over}) < 0$, while bidding truthfully yields $0$. This strictly lowers expected utility whenever the event $b_{-i,t} < b_{i,t}^{over}$ and $T=t$ has positive probability.
        
        \item Bluffing to deter others: A builder might also hope that a very high bid discourages others from competing aggressively. Yet this is inconsistent with rational play: by Lemma~\ref{sublemma:passive}, passive bidding is also dominated, so competitors will not exit the auction in response to high bids. Instead, overbidding only drives the price level upward, tightening the competition and further reducing the overbidder's probability of winning in later rounds.
    \end{enumerate}

    Therefore, across all possible stopping times $T$, overbidding yields weakly lower expected utility than valuation bidding, with strict disadvantage whenever the auction ends earlier than the overbidder anticipates. Hence overbidding is weakly dominated.
\end{proof}
\end{lemma}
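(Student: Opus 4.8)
The plan is to prove each sublemma by a direct weak-dominance argument: for every ``bad'' strategy I exhibit an alternative that, against every realization of the competing bids and every stopping round $T$, yields weakly higher utility, and strictly higher utility on an event of positive probability. The realized payoff~(\ref{eq:builder_realized_utility}) depends on the round-$T$ data only through the indicator $\mathbf{1}_{b_{i,T}=\max_n b_{n,T}}$ and the surplus $v_{i,T}-b_{i,T}$, so the whole comparison reduces to a round-by-round case split on the highest competing bid $b_{-i,t}=\max_{j\neq i}b_{j,t}$, followed by taking expectations over $T$ using~(\ref{eq:builder_expected_utility}).

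For Sublemma~\ref{sublemma:passive}, I would fix an arbitrary realization of $\{b_{-i,t}\}_t$ and of the stopping round and compare the passive bid $b_{i,t}=0$ with the active bid $b_{i,t}=v_{i,t}-\epsilon$. If at the stopping round $b_{-i,T}<v_{i,T}-\epsilon$, the active strategy wins and nets $\epsilon>0$ while the passive one gets $0$; otherwise both get $0$ there. Since a builder is never obliged to win --- it may re-submit a bid below the winning threshold in any round where winning would be unprofitable --- the active strategy can be taken to agree with passive behaviour exactly on those rounds, so it never does worse, and under the model's stochastic stopping time the event ``the auction halts at some round $z$ with $b_{-i,z}<v_{i,z}-\epsilon$'' has positive probability, giving the strict gain.

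For Sublemma~\ref{sublemma:overbidding}, I would similarly fix the competing bids and the stopping round and compare $b_{i,t}^{over}>v_{i,t}$ with the truthful bid $b_{i,t}^{val}=v_{i,t}$. The only round-$T$ configurations in which the two differ are: (i) $b_{-i,T}<v_{i,T}$, where both win but the truthful surplus $0$ beats $v_{i,T}-b_{i,T}^{over}<0$; and (ii) $v_{i,T}\le b_{-i,T}<b_{i,T}^{over}$, where the overbidder wins with negative payoff while the truthful bidder loses and gets $0$. In every other configuration the payoffs coincide, so the truthful bid is weakly better round by round and strictly better on the positive-probability event that the auction ends at a round of type (i) or (ii). To dispose of the ``bluffing / deterrence'' rationale for overbidding, I would invoke Sublemma~\ref{sublemma:passive}: since passive play is itself dominated, rational competitors do not exit in response to a high bid, so overbidding cannot shift the competing-bid distribution in $B_i$'s favour --- it can only (weakly) raise the competition level. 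Taking expectations over $T$ then yields weak dominance with strict loss whenever the auction terminates earlier than the overbidder hoped.

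The main obstacle is making precise the sense in which the stopping round $T$ and the competing bids are not influenced by $B_i$'s own bid in a way that could rescue the dominated strategies, and that the ``positive probability'' events are genuinely non-null. I would therefore state explicitly the modelling facts the argument relies on: the proposer's stopping rule does not reward being the current leader with a bid above valuation; every round carries positive probability of being the termination round (so $\mathbb{P}[T=z]>0$); the competing bids have enough spread that the threshold events above are non-null; and, crucially, within any round a builder may always ``opt out'' by bidding $0$, which is what lets the active/truthful strategy mimic the dominated one on unfavourable rounds and thereby upgrades ``sometimes strictly better, otherwise equal'' into genuine weak dominance rather than an incomparable pair.
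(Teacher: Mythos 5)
Your proposal is correct and follows essentially the same route as the paper: a direct weak-dominance comparison via case analysis on the highest competing bid $b_{-i,t}$ and the stopping round, with the passive-participation sublemma invoked to dismiss the deterrence rationale for overbidding. Your version is somewhat more careful than the paper's in spelling out the modelling facts needed for the strict-improvement events to be non-null and for the active/truthful strategy to mimic the dominated one on unfavourable rounds, but this is a refinement of the same argument rather than a different one.
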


\begin{theorem}[Optimal Bidding Strategies]
\label{thm:optimal_strategy}
    In an auction that can terminate at any round, any optimal bidding strategy for builder $B_i$ with private valuation $v_{i,t}$ exhibits the following properties:
    \begin{enumerate}
        \item $b_{i,t} \in (0, v_{i,t}]$ for all rounds $t$.
        \item Participation timing varies across builders: both continuous participation and delayed competitive entry can constitute weakly dominating strategies depending on individual risk preferences.
        \item Reactive adjustment to observed competitor behavior.
    \end{enumerate}
\end{theorem}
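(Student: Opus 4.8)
The plan is to establish the three properties in sequence, with property~1 following directly from the two sublemmas already proved and properties~2 and~3 requiring short structural arguments about participation timing and information use. For property~1, note that Lemma~\ref{sublemma:passive} shows any action with $b_{i,t}=0$ is weakly dominated and Lemma~\ref{sublemma:overbidding} shows any action with $b_{i,t}>v_{i,t}$ is weakly dominated; since in each case the dominated action can be replaced (by $b_{i,t}=v_{i,t}-\epsilon$ or $b_{i,t}=v_{i,t}$ respectively) without ever lowering utility and with a strict gain on a positive-probability event, no optimal strategy can prescribe such an action. Hence every optimal strategy keeps $b_{i,t}\in(0,v_{i,t}]$ at every round.

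For property~2 I would exhibit two realizable regimes under which opposite timing patterns are weakly dominant, so that no single pattern is optimal for all builders. Reading off \eqref{eq:builder_expected_utility}, the contribution of round $t$ to expected utility scales with $\mathbb{P}[T=t]\,\mathbb{P}[b_{i,t}=\max_n b_{n,t}]\,(v_{i,t}-b_{i,t})$. If a builder assigns substantial mass to early termination, then by the argument of Lemma~\ref{sublemma:passive} staying out of the early rounds forfeits a positive-probability win with surplus $\epsilon$, so continuous participation from round~$1$ weakly dominates any strategy that withholds early bids. If instead the builder believes termination is likely late and the latency graph $\mathcal{G}$ makes its own bids slow to propagate, an early bid mainly raises competitors' observed price floor (the ``price level upward'' effect noted in Lemma~\ref{sublemma:overbidding}) without improving its own winning odds; then withholding bids until a late round $t^\star$ and entering at $v_{i,t^\star}-\epsilon$ weakly dominates continuous participation. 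Both regimes are consistent with the model, differing only in the stopping distribution $\mathbb{P}[T=\cdot]$ and in $\mathcal{G}$, so the optimal timing is genuinely builder-specific, governed by beliefs about $T$ and tolerance for early-termination risk.

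For property~3 I would show that any history-blind strategy is weakly dominated. Take $\sigma_i$ with $b_{i,t}=f(v_{i,t})$ depending on $v_{i,t}$ only, and compare it with $\tilde\sigma_i$ that additionally reads $H_{i,t}$ and plays $\tilde b_{i,t}=\min\{v_{i,t},\max(f(v_{i,t}),\hat b_{-i,t}+\epsilon)\}$, where $\hat b_{-i,t}=\max H_{i,t}$ is the best competing bid currently observed. On the event $\{f(v_{i,t})<\hat b_{-i,t}<v_{i,t}\}$, $\sigma_i$ loses round $t$ while $\tilde\sigma_i$ wins with positive surplus; on $\{f(v_{i,t})>\hat b_{-i,t}\}$, $\tilde\sigma_i$ wins with weakly larger surplus by shading to just above the observed competition; on all other events the two induce the same outcome. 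Hence $\tilde\sigma_i$ weakly dominates $\sigma_i$, strictly on positive-probability events, so every optimal strategy must adjust reactively to $H_{i,t}$.

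The main obstacle is property~2: it is an existence-of-multiple-optima claim rather than a sharp characterization, and it must be argued inside a model with delayed, incomplete information, so care is needed to make the two regimes genuinely realizable under the assumptions of \autoref{sec:assumptions} while respecting that a builder never observes current competitor bids — which is precisely why delayed entry can help, since it avoids leaking a bid that tightens competition before the decisive rounds. Properties~1 and~3, by contrast, are essentially bookkeeping on top of the two sublemmas and the payoff structure in \eqref{eq:builder_realized_utility}.
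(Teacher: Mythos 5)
Your proposal is correct and follows essentially the same route as the paper: property~1 is read off from Lemmas~\ref{sublemma:passive} and~\ref{sublemma:overbidding}, property~2 is argued by exhibiting the two timing regimes (early-stopping mass favouring continuous participation, late-stopping mass plus information-leakage concerns favouring delayed entry) with neither dominating, and property~3 by a replicate-and-improve weak-dominance argument. Your version is somewhat more explicit than the paper's, particularly the concrete dominating strategy $\tilde\sigma_i$ for property~3 (note only that your prose about ``shading to just above the observed competition'' does not quite match the formula, which never bids below $f(v_{i,t})$; the weak-dominance conclusion is unaffected).
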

\begin{proof}
    \begin{enumerate}
        \item Follows directly from Lemma~\ref{sublemma:overbidding} and Lemma~\ref{sublemma:passive}.
        
        \item Consider two strategies: continuous participation $\sigma_i^{continuous}$ that maintains competitive bids throughout, and delayed entry $\sigma_i^{delay}$ that enters competitively only in later rounds. Continuous participation eliminates the risk of missing profitable opportunities if the proposer terminates the auction early, since it guarantees that $B_i$ always has a valid competitive bid on record. By contrast, delayed entry relies on the assumption that $\mathbb{P}[T=z]$ is concentrated on later rounds. In this case, withholding bids until a favorable round $z$ allows $B_i$ to enter with a competitive bid $b_{i,z} \leq v_{i,z}$ while avoiding the cost of exposing their valuation too early, which could lead to an increase in bid. The expected utility gain from bidding later is that $B_i$ may win with a lower payment, since competitors had less opportunity to react to its valuation. Both strategies can therefore yield positive expected utility: $\sigma_i^{continuous}$ secures opportunities under high early-stopping probability, while $\sigma_i^{delay}$ may achieve higher surplus when late termination is likely. Neither universally dominates, so both constitute weakly optimal strategies depending on risk preferences and the distribution of $T$.
        
        \item For any strategy that conditions bids on observed competitor actions versus any fixed strategy, the reactive approach can replicate the fixed strategy's performance while potentially improving outcomes when beneficial adjustments are possible, establishing weak dominance.
    \end{enumerate}
\end{proof}

\begin{theorem}[Auction Equilibrium Under Latency]
\label{thm:auction_equilibrium}
Let a PBS auction proceed in discrete, synchronous rounds $t = 1, 2, \dots, T$, where one round corresponds to 0.5 seconds. Let $\mathcal{G} = (\mathcal{Z}, \mathcal{E})$ be the network topology with maximum bid propagation delay $\Delta = \max_{i,j} d(i,j)$, where $d(i,j)$ denotes the latency in rounds between builders $B_i$ and $B_j$.

At each round $t$, builder $B_i$ submits a bid $b_{i,t} \in [0, v_{i,t}]$, where $v_{i,t}$ is the private valuation of builder $B_i$ at round $t$.

Let $i^* = \arg\max_i v_{i,T}$ be the index of the highest-valuation builder at the end of the auction. Define $v_{(1),T} \coloneqq v_{i^*,T}$ and $v_{(2),T} \coloneqq \max_{j \ne i^*} v_{j,T}$ as the highest and second-highest valuations at round $T$, respectively.

If all builders follow rational strategies $\sigma_i(v_{i,t}, H_{i,t})$, then:
\begin{enumerate}
    \item The builder $B_{i^*}$ with valuation $v_{(1),T}$ wins the auction;
    \item The winning bid $b_{i^*,T}$ satisfies $v_{(2),T} \le b_{i^*,T} \le v_{(1),T}$;
    \item Let $\epsilon > 0$ denote an arbitrarily small positive bid increment. Then \\ $\lim_{\Delta \to 0} b_{i^*,T} = v_{(2),T} + \epsilon$.
\end{enumerate}

\end{theorem}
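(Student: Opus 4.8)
The plan is to treat the round-$T$ auction as a latency-perturbed sealed-bid first-price auction and to read off the three claims from the structural facts already established in Lemma~\ref{sublemma:passive}, Lemma~\ref{sublemma:overbidding} and Theorem~\ref{thm:optimal_strategy}, interpreting ``all builders follow rational strategies'' as a no-profitable-deviation (equilibrium) profile of the dynamic game. Throughout I fix attention on the realized terminal round $T$ and its valuations $v_{j,T}$, using the standing assumption that the top valuation is strict, $v_{(2),T}<v_{(1),T}$ (consistent with the paper's ``no two equal maximum bids'' assumption). For Claim~1, I first record that by Lemma~\ref{sublemma:overbidding} every rational builder bids $b_{j,T}\le v_{j,T}$, so $\max_{j\ne i^*} b_{j,T}\le v_{(2),T}<v_{(1),T}$. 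If $B_{i^*}$ did not win, then $b_{i^*,T}\le\max_{j\ne i^*}b_{j,T}\le v_{(2),T}$, and $B_{i^*}$ could deviate to $b_{i^*,T}'=v_{(2),T}+\epsilon$, which is feasible for $\epsilon<v_{(1),T}-v_{(2),T}$ and strictly exceeds every competing bid, giving utility $v_{(1),T}-v_{(2),T}-\epsilon>0$ against the $0$ it earns by losing; this contradicts equilibrium, so $B_{i^*}$ wins and the winning bid is $b_{i^*,T}$. For Claim~2 the upper bound $b_{i^*,T}\le v_{i^*,T}=v_{(1),T}$ is immediate from Lemma~\ref{sublemma:overbidding}; for the lower bound, if $b_{i^*,T}<v_{(2),T}$ then the runner-up $B_j$ with $v_{j,T}=v_{(2),T}$ could deviate to $b_{i^*,T}+\epsilon\le v_{(2),T}$, become the unique highest bid, and earn $v_{(2),T}-b_{i^*,T}-\epsilon>0$, contradicting the equilibrium in which $B_{i^*}$ wins; hence $v_{(2),T}\le b_{i^*,T}\le v_{(1),T}$.

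For Claim~3 I would use $\Delta=\max_{i,j}d(i,j)\to 0$: since $H_{i,t}=\bigcup_j\{b_{j,k}:d(i,j)+k\le t\}$, as $\Delta\to 0$ every builder observes the full, up-to-date bid profile with no stale information and no reaction lag. Two squeezing facts then pin the limit. First, optimality of $B_{i^*}$: if $b_{i^*,T}>v_{(2),T}+\epsilon$, then $B_{i^*}$ could shade down to $v_{(2),T}+\epsilon$ and still win, since by the Claim~2 argument no competitor can feasibly bid above $v_{(2),T}$, strictly raising its utility; so optimality forces $b_{i^*,T}\le v_{(2),T}+\epsilon$. Second, Bertrand-type escalation of the runner-up: with vanishing latency $B_j$ can track $B_{i^*}$'s bid and, by Lemma~\ref{sublemma:passive}, will not stay out but keeps raising toward its ceiling $v_{(2),T}$, forcing $B_{i^*}$ to remain at least one increment ahead, i.e. $b_{i^*,T}\ge v_{(2),T}+\epsilon$, which sharpens the Claim~2 lower bound in the limit. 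Combining the two bounds gives $\lim_{\Delta\to 0}b_{i^*,T}=v_{(2),T}+\epsilon$.

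The main obstacle is the escalation step in Claim~3: making rigorous that in the zero-latency limit the runner-up is actually driven up to $v_{(2),T}$ rather than settling lower. This is a war-of-attrition / Bertrand fixed-point argument inside a synchronous-round game with a random stopping time, and one must show the escalation both converges — it halts exactly when $B_j$ reaches its valuation ceiling, which Lemma~\ref{sublemma:overbidding} forbids it to cross — and does so as $\Delta\to 0$ rather than only as $T\to 24$. Subsidiary care is needed with the discrete increment $\epsilon$ (for instance whether $v_{(2),T}$ lies on the bid grid), with the no-ties assumption, and with the possibility that valuations drift across rounds; I would control these by conditioning on the terminal round $T$ and invoking Theorem~\ref{thm:optimal_strategy} for the shape of rational play at that round.
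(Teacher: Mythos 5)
Your proof is correct at essentially the same level of rigor as the paper's, but several individual steps take a genuinely different route. For Claim~1 the paper argues through the latency structure: it invokes an unstated hypothesis $T \geq 2\Delta$, has $B_{i^*}$ observe the stale bid $b_{k,T-\Delta}$ of the putative winner, and respond with $b_{k,T-\Delta}+\epsilon$. You instead cap every rival's terminal bid at $v_{(2),T}$ via Lemma~\ref{sublemma:overbidding} and exhibit the deviation $v_{(2),T}+\epsilon \leq v_{(1),T}$; this sidesteps both the $T \geq 2\Delta$ condition and the hole in the paper's version that $B_k$ may have raised its bid between rounds $T-\Delta$ and $T$. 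For the lower bound in Claim~2 the paper simply asserts that $B_{i^*}$, \enquote{observing the second-highest valuation}, bids at least $v_{(2),T}+\epsilon$; your runner-up undercutting deviation is the standard Bertrand argument and is logically cleaner, since it derives the bound from someone's profitable deviation rather than postulating the winner's behavior. For Claim~3 the paper's proof actually establishes the double limit $\lim_{\epsilon\to 0,\,\Delta\to 0} b_{i^*,T}=v_{(2),T}$ rather than the stated $\lim_{\Delta\to 0} b_{i^*,T}=v_{(2),T}+\epsilon$; your two-sided squeeze addresses the statement as written, and you correctly flag the runner-up escalation step as the part that neither your sketch nor the paper makes rigorous. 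One caveat you share with the paper: the deviations in both proofs condition on $v_{(2),T}$, which is private information not contained in $(v_{i^*,t},H_{i^*,t})$, so strictly speaking the deviating strategy lies outside the admissible strategy space $\sigma_i(v_{i,t},H_{i,t})$; a fully rigorous version would have to route this through the observed bid history, which is exactly where latency re-enters.
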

\begin{proof}
We build on the standard English auction result~\cite{Mcafee1987AuctionsBidding}, where the dominant strategy is to bid up to one's valuation. The auction thus ends at the second-highest valuation, with the highest-valuation bidder winning.

We prove each item in order:
\begin{enumerate}
    \item Suppose, for contradiction, that a builder \(B_k \ne B_{i^*}\) with valuation \(v_{k,T} < v_{(1),T}\) wins the auction with bid \(b_{k,T} \leq v_{k,T}\). Since \(T \geq 2\Delta\), builder \(B_{i^*}\) observes \(b_{k,T - \Delta} \in H_{i^*,T}\). Then \(B_{i^*}\) can bid \(b_{i^*,T} = b_{k,T - \Delta} + \epsilon\), with \(\epsilon > 0\) arbitrarily small, while satisfying \(b_{i^*,T} \leq v_{(1),T}\), contradicting the assumption that \(B_k\) wins. Therefore, \(B_{i^*}\) must win.
    
    \item By rationality, \(b_{i^*,T} \leq v_{(1),T}\). Moreover, observing the second-highest valuation \(v_{(2),T}\), builder \(B_{i^*}\) bids at least \(b_{i^*,T} \geq v_{(2),T} + \epsilon\), where \(\epsilon > 0\) is an arbitrarily small increment. It follows that
    \(
    v_{(2),T} \leq b_{i^*,T} \leq v_{(1),T}.
    \)
    
    \item As \(\Delta \to 0\), bid propagation becomes instantaneous, and as \(\epsilon \to 0\), the increment above \(v_{(2),T}\) vanishes. Therefore,
    \(
    \lim_{\epsilon \to 0, \Delta \to 0} b_{i^*,T} = v_{(2),T}.
    \)

\end{enumerate}
\end{proof}

In reality, the existence of latency means that the highest-valuation builder does not always win, as delayed bid propagation can prevent timely responses to underbids. We further investigate the impact of such latency effects on the auction outcome through simulation in \autoref{sec:bid_sim}.

\subsection{MEV vs. Non-MEV Builders}

Assume two builders, \(B_1\) ($\mathtt{attack}$ ) and \(B_2\) ($\mathtt{benign}$), with access to identical mempools, i.e. \(\mathcal{M}_{B_1,t} = \mathcal{M}_{B_2,t}\), and employing the same bidding strategy \(\sigma_1 = \sigma_2\), later in \autoref{sec:simulation} the difference between reactive bidding strategies and last minute bidding is discussed in more detail. The distinction lies in their transaction selection behavior: builder \(B_1\) includes attack transactions \(x_k \in \mathbf{x}_{B_1,t}\) that target other transactions via the mapping \(\mathcal{A}(x_k) = x_j\) and yield \ac{mev} profits \(m_j > 0\), while builder \(B_2\) refrains from such behavior and includes no such transactions, i.e., \(\mathbf{x}_{B_2,t} = \emptyset\). 

\begin{theorem}[\ac{mev} Advantage]
\label{thm:centralization}
Let \(B_1\) be a $\mathtt{attack}$ builder and \(B_2\) a $\mathtt{benign}$ builder with identical mempools \(\mathcal{M}_{B_1,t} = \mathcal{M}_{B_2,t}\) and identical bidding strategies \(\sigma_1 = \sigma_2\). Then the private valuation of the $\mathtt{attack}$ builder is weakly greater:
\(v_{1,t} \geq v_{2,t}, \quad \forall t.\)

\end{theorem}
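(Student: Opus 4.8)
The plan is to exploit the fact that the $\mathtt{attack}$ builder's feasible set of blocks strictly contains the $\mathtt{benign}$ builder's, so that $B_1$ can always replicate $B_2$'s block and deviates from it only when doing so raises its valuation. First I would specialize the valuation formula \eqref{eq:builder_valuation} to $B_2$: since $\mathbf{x}_{B_2,t} = \emptyset$, the intersection $\mathcal{T}_{2,t} \cap \mathbf{x}_{B_2,t}$ is empty, the MEV indicator term drops out, and $v_{2,t} = \sum_{x_j \in \mathcal{T}_{2,t}} g_j$, where $\mathcal{T}_{2,t}$ is the capacity-many transactions of $\mathcal{M}_{B_2,t}$ with the largest gas fees (the benign transaction-ordering rule from \autoref{sec:actions}).

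Next I would use $\mathcal{M}_{B_1,t} = \mathcal{M}_{B_2,t}$ to note that the ordered list $\mathcal{T}_{2,t}$ is also a legal choice for $B_1$; if $B_1$ were to select exactly $\mathcal{T}_{2,t}$ and insert none of its own attack transactions, then \eqref{eq:builder_valuation} again gives valuation exactly $v_{2,t}$ (the MEV term is zero because no builder-initiated transaction is included). Hence $v_{2,t}$ is a lower bound on the value of the best block $B_1$ can build, and it remains to show that the block $B_1$ actually builds attains at least this bound.

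For that I would appeal to the greedy attack-insertion rule described in the Transaction Ordering paragraph and illustrated in \autoref{sec:actions}'s example box: starting from the benign block $\mathcal{T}_{2,t}$, $B_1$ swaps in a $0$-gas attack transaction together with its target $x_j$ only when the gained MEV $m_j$ strictly exceeds the total gas fees of the transactions displaced to make room. Each admissible swap therefore changes the block value by $m_j + g_j - (\text{displaced gas fees}) > g_j \geq 0$, i.e. it is a (weak) improvement over not swapping; by induction on the number of insertions, the final block satisfies $v_{1,t} \geq v_{2,t}$. The hypothesis $\sigma_1 = \sigma_2$ is not needed for this comparison, since valuation depends only on transaction selection and not on the submitted bid; it is stated for use in the subsequent centralization arguments about who wins the auction.

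The main obstacle I anticipate is making the "swap only when profitable" step fully rigorous under the block-capacity constraint: one must track carefully that inserting an attack transaction plus its target consumes two slots while $x_j$'s gas and the MEV $m_j$ are credited back, and that the greedy procedure — which need not produce $B_1$'s globally optimal block — is nonetheless never worse than the benign baseline. Framing the argument as a monotone sequence of local moves starting from $\mathcal{T}_{2,t}$, each non-decreasing in value, is the cleanest route and sidesteps having to characterize $B_1$'s optimal block explicitly.
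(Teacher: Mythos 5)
Your proposal is correct and follows essentially the same route as the paper's proof: both arguments observe that the benign block is always available to $B_1$ as a baseline, and that the attack builder's insertion rule only displaces transactions when the gained MEV exceeds the displaced gas fees, applied iteratively, so $v_{1,t} \geq v_{2,t}$. Your framing as a monotone sequence of local moves starting from $\mathcal{T}_{2,t}$, and your remark that $\sigma_1 = \sigma_2$ is not actually needed for the valuation comparison, are slightly more explicit than the paper's version but do not constitute a different argument.
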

\begin{proof}
Since both builders observe the same mempool and use the same bidding strategy \(\sigma(v_{i,t}, H_t)\), the difference in bids arises solely from the difference in private valuations.

By definition of the valuation function in \autoref{eq:builder_valuation}, \(v_{1,t}\) contains an additional non-negative term:
\(\sum_{x_k \in (\mathcal{T}_{1,t} \cap \mathbf{x}_{B_1,t})}
\mathbf{1}_{x_j = \mathcal{A}(x_k)} \cdot m_j \geq 0,
\)
representing MEV gains from attacking transactions initiated by \(B_1\), which \(B_2\) does not include.

Due to block size constraints, \(\mathcal{T}_{1,t}\) and \(\mathcal{T}_{2,t}\) may differ due to strategic inclusion decisions. However, a rational builder \(B_1\) includes \(x_k\) only if doing so increases block value, that is, if
\((\sum_{x_k \in (\mathcal{T}_{1,t} \cap \mathbf{x}_{B_1,t})} \mathbf{1}_{x_j = \mathcal{A}(x_k)} \cdot (m_j + g_j)
\;>\;
\sum_{x_r\in \mathcal{T}_{2,t}\setminus \mathcal{T}_{1,t}} g_r,
\)
where \(\mathcal{T}_{2,t} \setminus \mathcal{T}_{1,t}\) denotes transactions included by \(B_2\) but excluded by \(B_1\) to make space for the attacking transactions.

This decision is applied iteratively over all possible attacks, ensuring that the block valuation \(v_{1,t}\) is at least as large as \(v_{2,t}\), hence\(v_{1,t} \geq v_{2,t}.\)
Strict inequality holds if at least one attacking transaction generates strictly positive net value.
\end{proof}

Although \autoref{thm:centralization} concerns only the valuation gap between 
$\mathtt{attack}$ and $\mathtt{benign}$ builders, its implications differ fundamentally between PoS and ePBS. Under standard PoS, block proposers are selected uniformly at random (proportional to stake), and private valuation does not influence block production probability. Thus, an MEV-seeking validator may increase its per-block returns but does not increase its probability of producing blocks.

Combining the results of \autoref{thm:auction_equilibrium} and \autoref{thm:centralization}, we provide a partial theoretical answer to \textbf{RQ1.1} and \textbf{RQ1.2}. Specifically, under the assumptions of identical mempools and symmetric bidding strategies, $\mathtt{attack}$ builders are shown to consistently achieve higher private valuations for their proposed blocks, hence winning the auction by submitting the highest bid. Consequently, $\mathtt{attack}$ builders are theoretically guaranteed to win all auctions in this simplified setting. A more general analysis in a multi-agent, heterogeneous environment is conducted in \autoref{sec:rq1}.

\subsection{Reinvesting Staking}
\label{sec:reinvest}
In \ac{pos} without \ac{epbs}, centralization emerges naturally through restaking: participants who earn higher rewards accumulate more stake over time, increasing their probability of being selected again, an effect often summarized as \enquote{the rich get richer} \cite{huang2021}. While the previous analysis focused on single-block strategic interactions, we now analyze long-term reinvestment behavior in \ac{epbs} with vertical integration of builders and proposers \cite{Leonardo2022}. 

Building on the single-block formalization in \autoref{sec:formalization}, we now extend to multi-block dynamics. Let $s_i(\ell)$ denote the stake of node $i$ at slot $\ell$, and $\gamma_i \in \{0,1\}$ the reinvestment factor of node $i$ be the multiplier of how much profit is reinvested for staking. 
The probability of node $i$ being selected to be the block confirming participant at slot $\ell$ is:
$\frac{s_i(\ell)}{\sum_{j \in B_i, P_i} s_j(\ell)},$ 
where the chance of being selected is a uniform probability based on the stake of a single participant over all stakes of all participants, which includes both the proposers and builders.

\paragraph{Builder}
The stake evolution for builder $B_i$ follows
{\scriptsize
\[
\mathbb{E}[s_{B_i}(\ell+1)]
=
s_{B_i}(\ell)
+
\gamma_{B_i}
\Bigg[
\underbrace{
\frac{s_{B_i}(\ell)}{\sum_{j \in B_i, P_i} s_j(\ell)}
}_{\substack{\text{Probability of being}\\\text{selected as proposer}}}
\overbrace{
v_{i,T}(\ell)
}^{\substack{\text{Revenue}\\\text{as proposer}}}
+
\underbrace{
\left(
1 -
\frac{s_{B_i}(\ell)}{\sum_{j \in B_i, P_i} s_j(\ell)}
\right)
}_{\substack{\text{Probability of not}\\\text{being selected as proposer}}}
\overbrace{
U_B(B_i,T,\ell)
}^{\substack{\text{Auction revenue}\\\text{as builder}}}
\Bigg],
\]
}
where
\[
U_B(B_i,T,\ell)
=
\overbrace{
p_{B_i}^{win}(\ell)
}^{\substack{\text{Probability of winning}\\\text{the auction}}}
\,
\overbrace{
\bigl(
v_{i,T}(\ell) - b_{i,T}(\ell)
\bigr)
}^{\substack{\text{Revenue gained}\\\text{from auction}}}.
\]

The growth rate is:
\[\frac{s_{B_i}(\ell+1)}{s_{B_i}(\ell)} = 1 + \frac{\gamma_{B_i} R_{B_i}(\ell)}{s_{B_i}(\ell)} = 1 + \gamma_{B_i} \left[\frac{v_{i,T}(1-f \pi)}{\sum_j s_j} + \frac{f \pi v_{i,T}}{s_{B_i}}\right],\]
where $f$ is a function of valuation for the probability of winning the auction, and $\pi \in [0,1]$ is the fraction of valuation builder will gain.

Substituting in and simplifying (we are going to drop $\ell$):
$$\frac{s_{B_i}(\ell+1)}{s_{B_i}(\ell)} = 1 + \gamma_{B_i} \left[\frac{s_{B_i}}{\sum_j s_j} \cdot v_{i,T} + \left(1 - \frac{s_{B_i}}{\sum_j s_j}\right) U_B\right] \frac{1}{s_{B_i}}$$
$$\frac{s_{B_i}(\ell+1)}{s_{B_i}(\ell)} = 1 + \gamma_{B_i} \left[\frac{v_{i,T}}{\sum_j s_j} + \frac{U_B}{s_{B_i}} - \frac{U_B}{\sum_j s_j}\right].$$
For $U_B(B_i,T,\ell) = p_{B_i}^{win}(\ell)(v_{i,T}(\ell) - b_{i,T}(\ell))$, let the probability of winning the auction be a function of valuation and auction abilities taking value $[0,1]$. Let $p_{B_i}^{win}(\ell)$ be a function depending on valuation $f(v)$, and since $b < v$, replace $(v_{i,T}(\ell) - b_{i,T}(\ell))$ with $\pi v_{i,T}(\ell)$ where $\pi \in [0,1]$. This is the profit margin of the builder.
Substituting back, we get:
$$\frac{s_{B_i}(\ell+1)}{s_{B_i}(\ell)} = 1 + \gamma_{B_i} \left[\frac{v_{i,T}}{\sum_j s_j} + \frac{f \pi v_{i,T}}{s_{B_i}} - \frac{f \pi v_{i,T}}{\sum_j s_j}\right]$$
$$\frac{s_{B_i}(\ell+1)}{s_{B_i}(\ell)} = 1 + \gamma_{B_i} \left[\frac{v_{i,T}(1-f \pi)}{\sum_j s_j} + \frac{f \pi v_{i,T}}{s_{B_i}}\right].$$

\begin{figure}[h!]
    \centering
    \begin{minipage}[t]{0.45\textwidth}
        \centering
        \includegraphics[width=\linewidth]{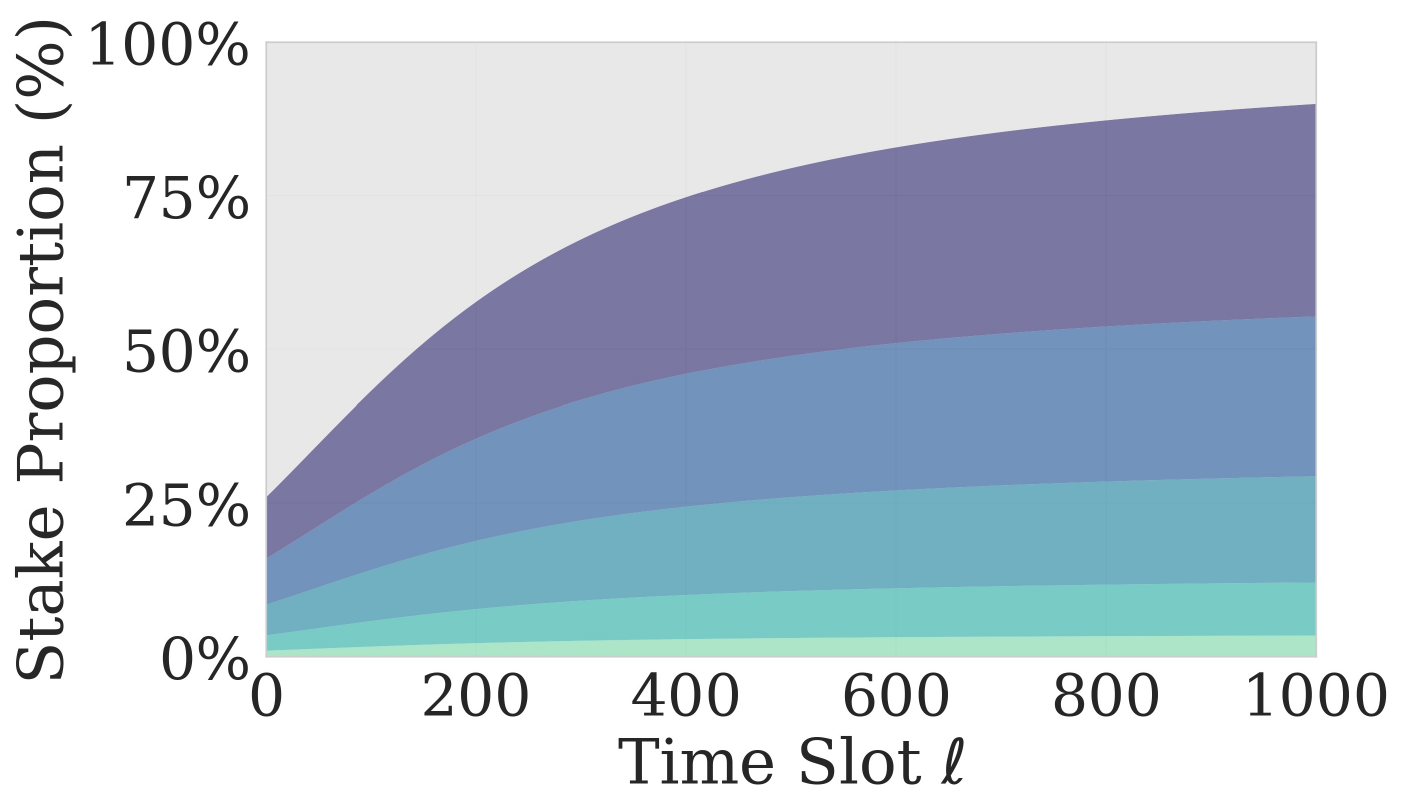}  
        \caption{Builder's stake over slots.}
        \label{fig:theory_stake}
    \end{minipage}
    \begin{minipage}[t]{0.45\textwidth}
        \centering
        \includegraphics[width=\linewidth]{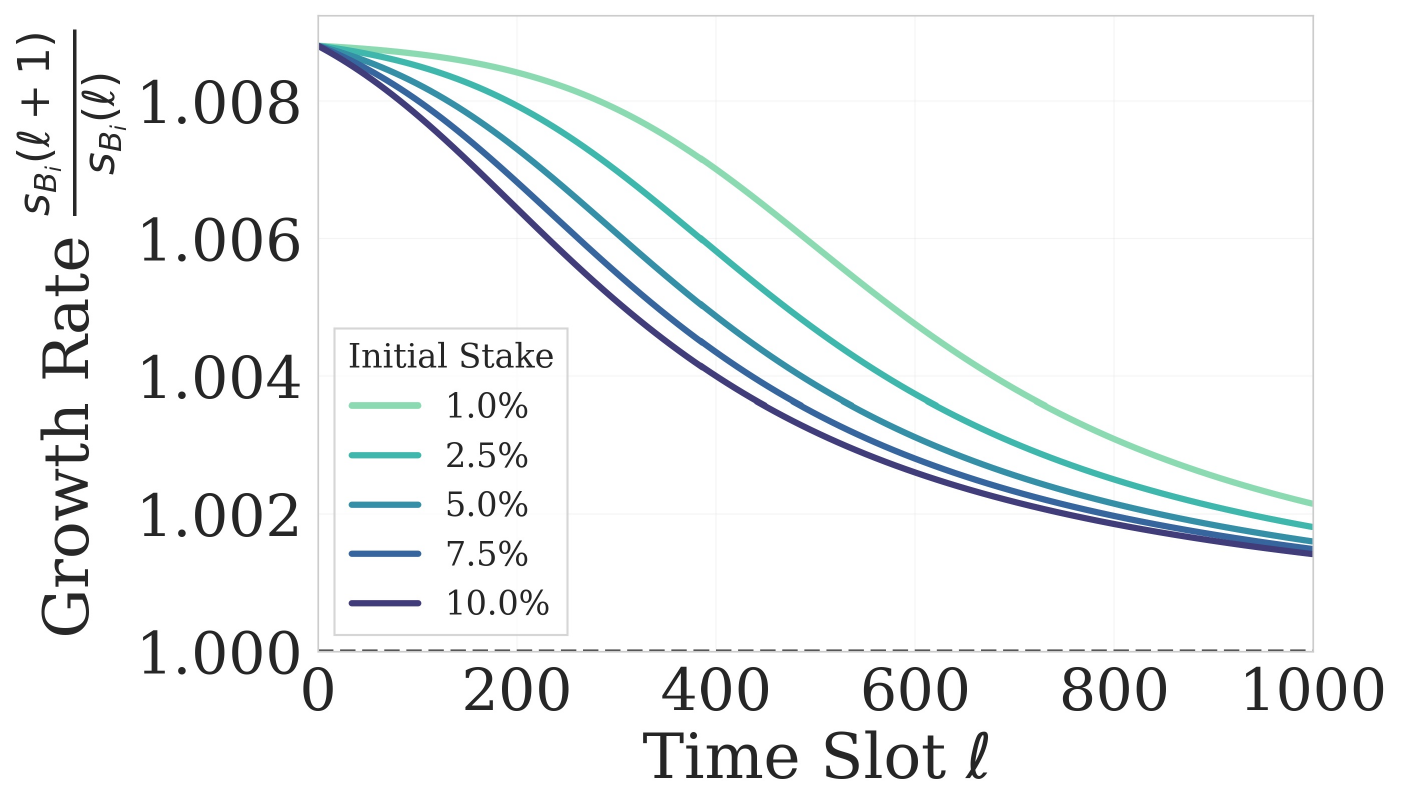}
        \caption{Builder's growth rate compared to current stake level.}
        \label{fig:theory_growth_rate}
    \end{minipage}
\end{figure}

\paragraph{Proposer}
The stake evolution for proposer $P_i$ is:
\[
s_{P_i}(\ell+1)
=
s_{P_i}(\ell)
+
\gamma_{P_i}\,
\frac{s_{P_i}(\ell)}{\sum_j s_j(\ell)}\,
b_{i,T}(\ell).
\]

The growth rate is:
\[
\frac{s_{P_i}(\ell+1)}{s_{P_i}(\ell)}
=
1
+
\gamma_{P_i}\,
\frac{b_{i,T}(\ell)}{\sum_j s_j(\ell)}.
\]

\paragraph{Validator}
The stake evolution for validator $V_i$ is:
\[
s_{V_i}(\ell+1)
=
s_{V_i}(\ell)
+
\gamma_{V_i}\,
\frac{s_{V_i}(\ell)}{\sum_j s_j(\ell)}\,
v_{i,T}(\ell).
\]

The growth rate is:
\[
\frac{s_{V_i}(\ell+1)}{s_{V_i}(\ell)}
=
1
+
\gamma_{V_i}\,
\frac{v_{i,T}(\ell)}{\sum_j s_j(\ell)}.
\]

From these growth functions, all agents exhibit positive stake accumulation. Validators centralize faster than proposers when $v > b$, since their growth rate scales directly with valuation. Builders exhibit an additional term $\tfrac{f \pi v_{i,T}}{s_{B_i}}$, which is large when $s_{B_i} \ll \sum_j s_j$ but diminishes as $s_{B_i}$ grows. Thus, their relative growth rate slows with higher stake, though absolute increments $\Delta s_{B_i} = \gamma_{B_i} R_{B_i}(\ell)$ may still increase due to the larger stake, as shown in \autoref{fig:theory_stake} and \autoref{fig:theory_growth_rate}. 

This analysis answers \textbf{RQ1}: in \ac{pos} with \ac{epbs}, the \enquote{rich get richer} effect \cite{huang2021} is amplified by builder–proposer integration. Builders accumulate capital even when not selected as proposers, reinforcing long-term centralization of block content. We further examine parameter sensitivities in \autoref{sec:restaking}.

\subsection{Expected MEV Inclusion Probability Under ePBS and PoS}
\label{sec:pbs_pos_mev_expectation}

We analyze the probability that the winning block under \ac{pos} with and without \ac{epbs} contains at least one profitable \ac{mev} transaction pair. 

Let \(\mathcal{T}_{i,T}\) be the ordered list of transactions in the block proposed by builder \(B_i\) at round \(T\). Define the event  
\(E_{\mathrm{MEV}} \coloneqq \{\exists\, x_j, x_k \in \mathcal{T}_{i,T} : \mathcal{A}(x_k) = x_j,\; m_j > 0\}\),  
that is, the block includes at least one profitable \ac{mev} pair.  

Let \(\theta = \mathbb{P}_{\mathrm{system}}[E_{\mathrm{MEV}}]\) denote the baseline prior probability. Introduce \(W\) as the event that a builder's block wins the auction. The posterior inclusion probabilities are  
\(\phi_{\mathrm{ePBS}} \coloneqq \mathbb{P}_{\mathrm{ePBS}}[E_{\mathrm{MEV}} \mid W]\) and \(\phi_{\mathrm{PoS}} \coloneqq \mathbb{P}_{\mathrm{PoS}}[E_{\mathrm{MEV}} \mid W]\).

\begin{proposition}[Block Producer–Initiated MEV]
\label{prop:pbs_mev_bias}
Under \ac{epbs}, the probability of including \ac{mev} transaction pairs is weakly higher than under \ac{pos}.
\end{proposition}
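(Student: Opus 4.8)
The plan is to condition on the identity of the builder whose block wins the slot, reducing both posterior probabilities to weighted averages of per-builder MEV-inclusion probabilities, and then to show that the \ac{epbs} winning distribution places weakly more mass on exactly those builders whose blocks are most likely to carry a profitable MEV pair. Concretely, for each builder $B_i$ define $q_i \coloneqq \mathbb{P}[E_{\mathrm{MEV}} \mid \text{the winning block is built by } B_i]$, so that $\phi_{\mathrm{ePBS}} = \sum_i w_i^{\mathrm{ePBS}} q_i$ and $\phi_{\mathrm{PoS}} = \sum_i w_i^{\mathrm{PoS}} q_i$, where $w_i^{\bullet}$ is the block-production distribution over builders under each regime. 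The statement then reduces to comparing the two families of weights against the common vector $(q_i)$.

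First I would establish the ordering $q_i \ge q_j$ whenever $B_i$ is an $\mathtt{attack}$ builder and $B_j$ a $\mathtt{benign}$ builder with a comparable mempool. This follows the logic of \autoref{thm:centralization}: by the valuation identity \autoref{eq:builder_valuation}, a rational $\mathtt{attack}$ builder inserts its own attack transaction $x_k$ with $\mathcal{A}(x_k)=x_j$ and $m_j>0$ precisely when doing so raises block value, and whenever it does the realized block satisfies $E_{\mathrm{MEV}}$ by construction; a $\mathtt{benign}$ builder's block satisfies $E_{\mathrm{MEV}}$ only through user-initiated attack pairs that happen to survive its fee-ordered selection, which is a sub-event of the former. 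Hence $q^{\mathtt{attack}} \ge q^{\mathtt{benign}} \ge 0$.

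Next I would pin down the two winning distributions. Under \ac{pos}, the block producer is drawn proportionally to stake and independently of block content or builder type (as noted immediately after \autoref{thm:centralization}), so $w_i^{\mathrm{PoS}} = s_i / \sum_j s_j$ carries no bias toward high-$q_i$ builders. Under \ac{epbs}, \autoref{thm:auction_equilibrium} shows the auction winner is the highest-valuation builder, and \autoref{thm:centralization} shows $\mathtt{attack}$ builders have weakly higher valuations; therefore $w_i^{\mathrm{ePBS}}$ shifts probability mass from $\mathtt{benign}$ to $\mathtt{attack}$ builders relative to $w_i^{\mathrm{PoS}}$, and in the symmetric extreme of \autoref{thm:centralization} it collapses entirely onto $\mathtt{attack}$ builders, giving $\phi_{\mathrm{ePBS}} = q^{\mathtt{attack}} \ge \phi_{\mathrm{PoS}}$. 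The general conclusion is then a covariance (Chebyshev sum) inequality: transferring weight from builders with smaller $q_i$ to builders with larger $q_i$ cannot decrease the weighted mean, so $\sum_i w_i^{\mathrm{ePBS}} q_i \ge \sum_i w_i^{\mathrm{PoS}} q_i$, i.e. $\phi_{\mathrm{ePBS}} \ge \phi_{\mathrm{PoS}} \ge \theta$, with equality when no $\mathtt{attack}$ builders are present or when $\theta$ is degenerate.

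The main obstacle is the step asserting that $w^{\mathrm{ePBS}}$ dominates $w^{\mathrm{PoS}}$ in the order induced by $q_i$. Two effects complicate it: with nonzero network latency the highest-valuation builder wins only most of the time (item~1 of \autoref{thm:auction_equilibrium} needs $T \ge 2\Delta$), and if stake were positively correlated with $\mathtt{benign}$ behavior the \ac{pos} weights could themselves tilt away from $\mathtt{attack}$ builders, weakening the comparison. The cleanest route, consistent with the rest of \autoref{sec:maths}, is to carry out the argument in the zero-latency regime and to treat stake as uncorrelated with builder type, so that $w^{\mathrm{PoS}}$ is type-neutral while $w^{\mathrm{ePBS}}$ is type-biased toward high $q_i$; the latency-perturbed and stake-correlated cases are then naturally relegated to the simulations in \autoref{sec:results}.
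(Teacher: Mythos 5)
Your argument is correct in substance but takes a genuinely different route from the paper's. The paper works directly with the likelihood ratio $\omega = \mathbb{P}_{\mathrm{ePBS}}[W \mid E_{\mathrm{MEV}}]/\mathbb{P}_{\mathrm{ePBS}}[W \mid \neg E_{\mathrm{MEV}}] \ge 1$ (justified by appeal to \autoref{thm:centralization}), solves the total-probability identity $a(1-\theta)+\omega a\theta = 1$ for $a = \mathbb{P}_{\mathrm{ePBS}}[W\mid \neg E_{\mathrm{MEV}}]$, and applies Bayes' rule to obtain the closed form $\phi_{\mathrm{ePBS}} = \omega\theta/(1+\theta(\omega-1))$, which exceeds $\theta=\phi_{\mathrm{PoS}}$ precisely because $(\omega-1)(1-\theta)\ge 0$. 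You instead decompose by the identity of the winning block producer and compare the two winning distributions against the vector $(q_i)$ via a weight-shifting (Chebyshev-sum) inequality. Both arguments rest on the same economic input --- \autoref{thm:centralization} makes MEV-bearing blocks more likely to win --- but the paper's version is more compact and sidesteps a subtlety in yours: you define $q_i$ conditionally on $B_i$ winning and then treat $(q_i)$ as a regime-invariant vector, yet under \ac{epbs} that conditioning is itself informative (an MEV-bearing block by $B_i$ is more likely to be the one that wins), so the ePBS and PoS versions of $q_i$ do not coincide. The discrepancy happens to point in the favorable direction --- conditioning on winning under \ac{epbs} can only inflate $q_i$ relative to its unconditional PoS counterpart --- so your inequality survives, but you should either define $q_i$ unconditionally (per-producer prior probability of an MEV-bearing block) and fold the conditioning effect into a second, same-signed correction, or state explicitly that the mismatch is favorable. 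What your route buys is an explicit handling of builder heterogeneity and a candid accounting of the latency and stake-correlation caveats, which the paper's proof glosses over by simply asserting $\omega\ge 1$; what the paper's route buys is a closed-form posterior in terms of a single scalar advantage parameter.
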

\begin{proof}
    Let \(\omega = \frac{\mathbb{P}_{\mathrm{ePBS}}[W \mid E_{\mathrm{MEV}}]}{\mathbb{P}_{\mathrm{ePBS}}[W \mid \neg E_{\mathrm{MEV}}]} \geq 1\) representing the advantage of winning the auction if the block has included \ac{mev} transactions derived from \autoref{thm:centralization}.  
Set \(a = \mathbb{P}_{\mathrm{ePBS}}[W \mid \neg E_{\mathrm{MEV}}]\), then \(\mathbb{P}_{\mathrm{ePBS}}[W \mid E_{\mathrm{MEV}}] = \omega a\).  
By the law of total probability, \(a(1-\theta) + \omega a \theta = 1\), hence \(a = \frac{1}{1 + \theta(\omega - 1)}\).  
Applying Bayes' rule yields  
\(\phi_{\mathrm{ePBS}} = \frac{\omega \theta}{1 + \theta(\omega - 1)}\).

\end{proof}

\begin{proposition}[User–Initiated MEV]
\label{prop:user_mev_invariance}
When block producers (builders and validators) are $\mathtt{benign}$ and proposers do not vertically integrate to be builders, the inclusion probability of user-initiated \ac{mev} pairs is equal across \ac{epbs} and \ac{pos}.
\end{proposition}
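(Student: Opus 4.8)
The plan is to show that, once every block producer is benign and proposers are not integrated with builders, the MEV potential $m_j$ of a user-initiated pair never enters any quantity to which the block-selection or auction mechanism responds, so the winning block is the same function of the mempool under \ac{epbs} and under \ac{pos}. Note first that when all producers are benign there are no producer-initiated attack transactions, so the event $E_{\mathrm{MEV}}$ automatically concerns only user-initiated pairs $(x_j, x_k)$ with $x_k \in \mathbf{x}_{U,t}$ for some attacking user $U$ and $\mathcal{A}(x_k) = x_j$, $m_j > 0$.

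First I would specialize \autoref{eq:builder_valuation} and the builder-action rule to benign builders: since $\mathbf{x}_{B_i,t} = \emptyset$, the MEV term vanishes and $v_{i,t} = \sum_{x_j \in \mathcal{T}_{i,t}} g_j$, i.e. the block value is just the sum of gas fees of the selected transactions, and the selection rule is deterministic -- take the highest-gas-fee transactions up to block capacity, ordered by descending fee. The key structural point is that $x_j$ and $x_k$ are here ordinary mempool transactions, included or not according to their fees $g_j, g_k$ alone, with $m_j$ playing no role whatsoever. This is exactly the contrast with \autoref{prop:pbs_mev_bias}, where $m_j$ entered the attack builder's valuation and thereby biased the auction toward MEV-containing blocks.

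Second, working in the identical-mempool setting carried over from \autoref{sec:pbs_pos_mev_expectation} (as in the hypotheses of \autoref{thm:centralization}), every benign builder applies the same deterministic rule to the same mempool $\mathcal{M}_t$ and hence produces the identical ordered block $\mathcal{T}^\star(\mathcal{M}_t)$ with identical valuation; the identity of the auction winner $W$ is then immaterial and the committed block content is exactly $\mathcal{T}^\star(\mathcal{M}_t)$, which the non-integrated proposer can neither reorder nor augment. Under \ac{pos} the uniformly-selected benign validator applies the same rule to the same mempool and produces the same $\mathcal{T}^\star(\mathcal{M}_t)$. Therefore $\mathbf{1}[E_{\mathrm{MEV}}]$ -- ``both $x_j$ and $x_k$ lie in $\mathcal{T}^\star(\mathcal{M}_t)$'' -- is the same measurable function of the mempool realization in both systems, conditioning on $W$ is vacuous, and taking expectations over mempool realizations gives $\phi_{\mathrm{ePBS}} = \mathbb{P}[E_{\mathrm{MEV}}] = \phi_{\mathrm{PoS}}$ (both equal the prior $\theta$), which is the claim.

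The main obstacle is the heterogeneity I have suppressed: if builders' mempools differ because of latency in $\mathcal{G}$, the fee-maximizing auction may be won by a builder whose mempool happens to be richer in high-fee transactions, and that is not the mempool distribution of a uniformly random validator. I would close this either by keeping the identical-mempool assumption already used throughout this section, or, for the heterogeneous version, by an exchangeability/coupling argument: the builders' mempools are symmetrically distributed, so the law of ``the mempool whose benign block wins'' can be coupled to the law of a random validator's mempool by a relabeling that leaves $E_{\mathrm{MEV}}$ invariant, since that event depends only on transaction identities and fees, not on which node holds the mempool. I would also state explicitly where the argument fails -- namely if any producer is an attacker, which re-injects $m_j$ into a valuation and revives the bias of \autoref{prop:pbs_mev_bias}, or if the proposer integrates with a builder, which reopens a producer-side reordering channel at commit time.
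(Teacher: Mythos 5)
Your proof is correct and rests on the same core idea as the paper's, but it does substantially more work than the paper does. The paper's entire proof of this proposition is one sentence: block selection under \ac{pos} is independent of \ac{mev} content, hence $\phi_{\mathrm{PoS}} = \theta$; the \ac{epbs} half of the equality is left implicit, to be read off from the formula $\phi_{\mathrm{ePBS}} = \frac{\omega\theta}{1+\theta(\omega-1)}$ in Proposition~\ref{prop:pbs_mev_bias} by setting $\omega = 1$ when builders are benign. What you supply that the paper does not is the justification for that $\omega = 1$ step: you show from \autoref{eq:builder_valuation} that a benign builder's valuation reduces to a pure gas-fee sum in which $m_j$ never appears, so the auction cannot favor \ac{mev}-containing blocks. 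More importantly, you correctly flag the one place where $\omega = 1$ is not automatic even for benign builders --- under the paper's own latency model, mempools are heterogeneous, the auction winner is the builder with the fee-richest mempool, and the winning mempool's law need not match a uniformly random validator's mempool, so conditioning on $W$ is not trivially vacuous. Your proposed fixes (restrict to identical mempools, or use exchangeability of the mempool distribution across nodes together with the fact that $E_{\mathrm{MEV}}$ depends only on transaction identities and fees) are exactly what is needed to make the claim rigorous in the general setting; the paper's one-line argument silently assumes this away. So your version is the more complete proof of the same statement rather than a different route.
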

\begin{proof}
    Since block selection under \ac{pos} is independent of \ac{mev} content, we obtain  
\(\phi_{\mathrm{PoS}} = \theta\).
\end{proof}

\begin{theorem}[MEV Inclusion Probability]
\label{thm:pbs_pos_mev}
Taken together, proposition \ref{prop:pbs_mev_bias} and \ref{prop:user_mev_invariance} imply that \ac{epbs} strictly increases the posterior inclusion probability of \ac{mev} pairs when $\mathtt{attack}$ producers exploit opportunities, but reduces to the baseline prior when producers are $\mathtt{benign}$.
\end{theorem}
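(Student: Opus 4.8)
The plan is to work directly with the two closed-form expressions established in the two propositions and to compare them algebraically. From Proposition~\ref{prop:pbs_mev_bias} we have $\phi_{\mathrm{ePBS}} = \frac{\omega\theta}{1+\theta(\omega-1)}$ with $\omega \geq 1$, and from Proposition~\ref{prop:user_mev_invariance} we have $\phi_{\mathrm{PoS}} = \theta$. First I would form the difference $\phi_{\mathrm{ePBS}} - \phi_{\mathrm{PoS}}$ over the common denominator $1+\theta(\omega-1)$, which is positive whenever $\theta \in [0,1)$ and $\omega \geq 1$. The numerator simplifies to $\omega\theta - \theta\bigl(1+\theta(\omega-1)\bigr) = \theta(\omega-1)(1-\theta)$, giving the clean factorization
\[
\phi_{\mathrm{ePBS}} - \phi_{\mathrm{PoS}} = \frac{\theta(1-\theta)(\omega-1)}{1+\theta(\omega-1)}.
\]

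Next I would read the sign off this factorization. Each of the three factors $\theta$, $1-\theta$, and $\omega-1$ in the numerator is nonnegative (using $\theta\in[0,1]$ and $\omega\geq 1$), and the denominator is positive, so the difference is always $\geq 0$; hence $\phi_{\mathrm{ePBS}} \geq \phi_{\mathrm{PoS}}$ unconditionally, and it is strictly positive precisely when all three numerator factors are nonzero, i.e. when $0 < \theta < 1$ and $\omega > 1$. It then remains to tie the two regimes of the statement to the value of $\omega$. When $\mathtt{attack}$ producers exploit the available opportunities, Theorem~\ref{thm:centralization} gives a strictly positive valuation advantage to blocks containing MEV pairs, which via the auction equilibrium of Theorem~\ref{thm:auction_equilibrium} translates into $\mathbb{P}_{\mathrm{ePBS}}[W\mid E_{\mathrm{MEV}}] > \mathbb{P}_{\mathrm{ePBS}}[W\mid \neg E_{\mathrm{MEV}}]$, i.e. $\omega > 1$, so $\phi_{\mathrm{ePBS}} > \theta = \phi_{\mathrm{PoS}}$ for any $\theta \in (0,1)$. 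Conversely, when all producers are $\mathtt{benign}$ and proposers are not vertically integrated with builders, block selection is independent of MEV content (the premise of Proposition~\ref{prop:user_mev_invariance}), forcing $\mathbb{P}_{\mathrm{ePBS}}[W\mid E_{\mathrm{MEV}}] = \mathbb{P}_{\mathrm{ePBS}}[W\mid \neg E_{\mathrm{MEV}}]$, hence $\omega = 1$, and the factorization collapses to $\phi_{\mathrm{ePBS}} = \theta = \phi_{\mathrm{PoS}}$.

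I expect the main obstacle to be not the algebra — a one-line factorization — but the careful handling of the boundary cases and the precise justification that $\omega = 1$ holds \emph{exactly}, not merely approximately, in the benign regime. I would therefore state explicitly that the strict-increase claim requires the mild non-degeneracy condition $0 < \theta < 1$ (if MEV opportunities never arise, or always do, there is nothing to amplify), and I would point back to the protocol assumption in \autoref{sec:system-protocol} that ePBS proposers cannot observe or reorder transactions before committing, which is exactly what makes the winning probability MEV-independent and pins $\omega = 1$ when builder and proposer roles are separated. With those two points fixed, the theorem follows by combining the sign analysis of the displayed identity with the two regime identifications of $\omega$.
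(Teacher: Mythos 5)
Your proposal is correct and follows exactly the route the paper intends: the paper states Theorem~\ref{thm:pbs_pos_mev} as an immediate consequence of the two propositions and supplies no written proof, so your factorization $\phi_{\mathrm{ePBS}}-\phi_{\mathrm{PoS}}=\frac{\theta(1-\theta)(\omega-1)}{1+\theta(\omega-1)}$ simply makes explicit the algebra the authors leave implicit, and the sign analysis and regime identification ($\omega>1$ under $\mathtt{attack}$ producers via Theorem~\ref{thm:centralization}, $\omega=1$ under $\mathtt{benign}$ producers) match their reasoning. Your added observation that strictness requires the non-degeneracy condition $0<\theta<1$ is a genuine refinement the paper glosses over, and is worth keeping.
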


\section{Simulation Design}
\label{sec:simulation}

This section outlines the simulation framework developed to analyze the behavior of various participants in a blockchain environment under different conditions. Our use of agent-based simulation modeling follows established practice for studying strategic interaction in auction-based and blockchain systems \cite{Carbonneau2016,Jeong2026LiqBoost:Exchanges,Ntuala2026ADetection,Ranganatha2025}.
Users, builders, proposers, and network latency are modeled according to the setup in \autoref{sec:formalization}. The simulation code is available at \href{https://github.com/xujiahuayz/mev-resist-consensus}{our GitHub repository}.

\subsection{Assumptions and Parameters}
\label{sec:sim_assump}

The simulation is based on several key parameters and assumptions: 
For the \ac{epbs} system, it includes 100 users, 50 builders, and 50 proposers; for the \ac{pos} system, it includes 100 users and 50 validators, testing with the number of \ac{mev}-seeking participants from none to all. We choose 50 validators as a parameter because we compare the validators with proposer and builders in \ac{epbs} separately, used as a benchmark. The block capacity is 100, and the gas fee is randomly sampled, where both are based on the information fetched on-chain from Ethereum mainnet data. We sample multiple periods after the merge in 2022, including stable periods and highly volatile periods, for example, the USDC depeg, FTX crash, and Terra Luna fall.
\ac{mev} are sampled from the distribution given in FlashBot data \cite{mev_distribution} which shows a Gamma distribution. The simulation primarily focuses on simple \ac{mev} strategies such as front running, back running, and sandwich attacks. Latency and network connectivity follow the design in \autoref{sec:formalization}, the specific graph we choose for the simulation is the Erdos-Renyi random graph, which is often used as a baseline to analyze the Ethereum network topology~\cite{Li2021TopoShot:Transactions}. All other parameters are kept the same throughout the simulation for consistency. Each simulation run includes 1,000 blocks. 

For analyzing long-term centralization effects, we extend the simulation with additional parameters modeling stake accumulation dynamics. Each consensus participant is assigned a reinvestment factor $\gamma_i = \{0,1\}$ with equal probability, reflecting the polarized behavior observed in practice where validators either extract all profits ($\gamma_i = 0$) or participate in liquid staking protocols with automatic compounding ($\gamma_i = 1$) \cite{Beccuti2025TowardsMarket}. Initial stake distributions are varied across participants to simulate realistic heterogeneity, with total network stake normalized to represent approximately 1,000 active validators. The discrete staking threshold is set to $32 \times 10^9$ gwei (equivalent to 32 ETH), reflecting Ethereum's validator requirements. Participants accumulate capital continuously through $k_i(\ell+1) = k_i(\ell) + \gamma_i \cdot R_i(\ell)$, but active stake updates only when crossing threshold boundaries: $s_i(\ell+1) = (32 \times 10^9) \cdot \lfloor k_i(\ell+1) / (32 \times 10^9) \rfloor$. For these long-term equilibrium studies, simulations are extended to 10,000 blocks to observe convergence behavior.

\subsection{Simulation Process}

In the simulation, we model both \ac{pos} with \ac{epbs} and without \ac{epbs} to compare their performance under similar parameters.

In the \ac{epbs} system, builders select transactions for inclusion in their blocks based on their strategy, whether $\mathtt{benign}$ or $\mathtt{attack}$. After selecting transactions, builders calculate and submit bids to propose their blocks according to their strategy. In our simulation, we use two strategies: $\mathtt{reactive}$ and $\mathtt{last} \space \mathtt{minute}$, which are among the most used in current relay-based \ac{pbs} deployments~\cite{ThomasThiery2023EmpiricalBBPs, Wu2023}. A reactive builder at round~\(t\) bids 
\(\, b_{i,t} = \min\{ v_{i,t},\; \max_{j \in \{B_i\}} b_{j,k} + \delta \mid k \le t \}\,\),
matching the highest observed bid in history with a small increment \(\delta > 0\), constrained by its private valuation \(v_{i,t}\). A last-minute builder, by contrast, waits until round~\( t \ge c \) where \(c\) is a threshold round near the end of the auction, and then bids 
\(\, b_{i,t} = \min\{ v_{i,t},\; \max_{j \in \{B_i\}} b_{j,k} + \delta \mid k \le t \} \cdot \mathbf{1}_{\{t \ge c\}} \,\),
using the same competitive bidding strategy as reactive builders but only participating from their chosen threshold round \(c\) onward. This strategy allows the builder to delay participation and reduce reaction times for other builders; however, they risk not having submitted a bid at the end of the auction or that their bid has not propagated to the proposer.

The highest bid at the end of the auction is selected, and the corresponding block is included in the blockchain. All proposers deploy an adaptive auction termination strategy, where the auction stopping round \(T_s \in \{1,2,\dots,24\}\) for slot~\(s\) is adjusted based on observations from the previous slot~\(s{-}1\). Specifically, the proposer observes whether a higher bid than the winning bid \(b_w\) of slot \(s{-}1\) was submitted after or before the auction ended in that slot. The strategy is defined as
\[
T_s =
\begin{cases}
T_{s-1} + 1, & \text{if } \exists\, t > T_{s-1} \text{ s.t. } b_{i,t} > b_w, \\
T_{s-1} - 1, & \text{if } \exists\, t < T_{s-1} \text{ s.t. } b_{i,t} > b_w, \\
T_{s-1},     & \text{otherwise}.
\end{cases}
\]
Here, $T_s$ is constrained such that $1 \leq  T_s \leq 24$, respecting the maximum number of auction rounds allowed per slot. If the auction is terminated at round \(T_s\), any bids \(b_{i,t}\) with \(t > T_s\) are ignored for that slot, though they may inform proposer decisions in subsequent slots. This adaptive mechanism allows proposers to reward early bidding and discourage excessive last-minute bidding, while avoiding timeout risk by enforcing a strict upper bound.   

In the \ac{pos} system, a validator is randomly selected to propose a block. The validator selects transactions from its mempool \(\mathcal{M}_{V_i,t}\) at any time \(t \in \{1, 2 \dots, 24\}\) according to its attribute $\tau_{V_i}$, and the resulting block is validated and added to the chain.

\section{Simulation Results}
\label{sec:results}

In this section, we present the findings of our study on the impact of \ac{epbs} on decentralization and its economic implications, based on simulations calibrated with Ethereum on-chain data. We analyze the distribution of builders and validators, the effects on transaction dynamics, and compare the outcomes of \ac{epbs} with those of \ac{pos}. Each subsection addresses specific research questions through targeted experiments and analyses.

\subsection{RQ1 Answer: Impact of \ac{epbs} on Decentralization}
\label{sec:rq1}

The hypothesis is that \ac{epbs} will result in a broader distribution of profits within participants, indicating increased decentralization, as it aims to reduce the influence of dominant participants and promote a more balanced and equitable network ecosystem. To test this hypothesis, we conduct experiments to measure the decentralization of \ac{epbs}. We compare the profit and block-building dynamics between \ac{epbs} and \ac{pos}.

\subsubsection{Builder and Validator Distribution}
To assess decentralization, we first analyze the distribution of builders and validators and investigate how the distribution of builders changes with different percentages of \ac{mev} builders and \ac{mev} users.

\begin{figure}[tb]
    \centering
    \begin{subfigure}[t]{0.45\textwidth}
        \centering
        \includegraphics[height=0.25\textheight]{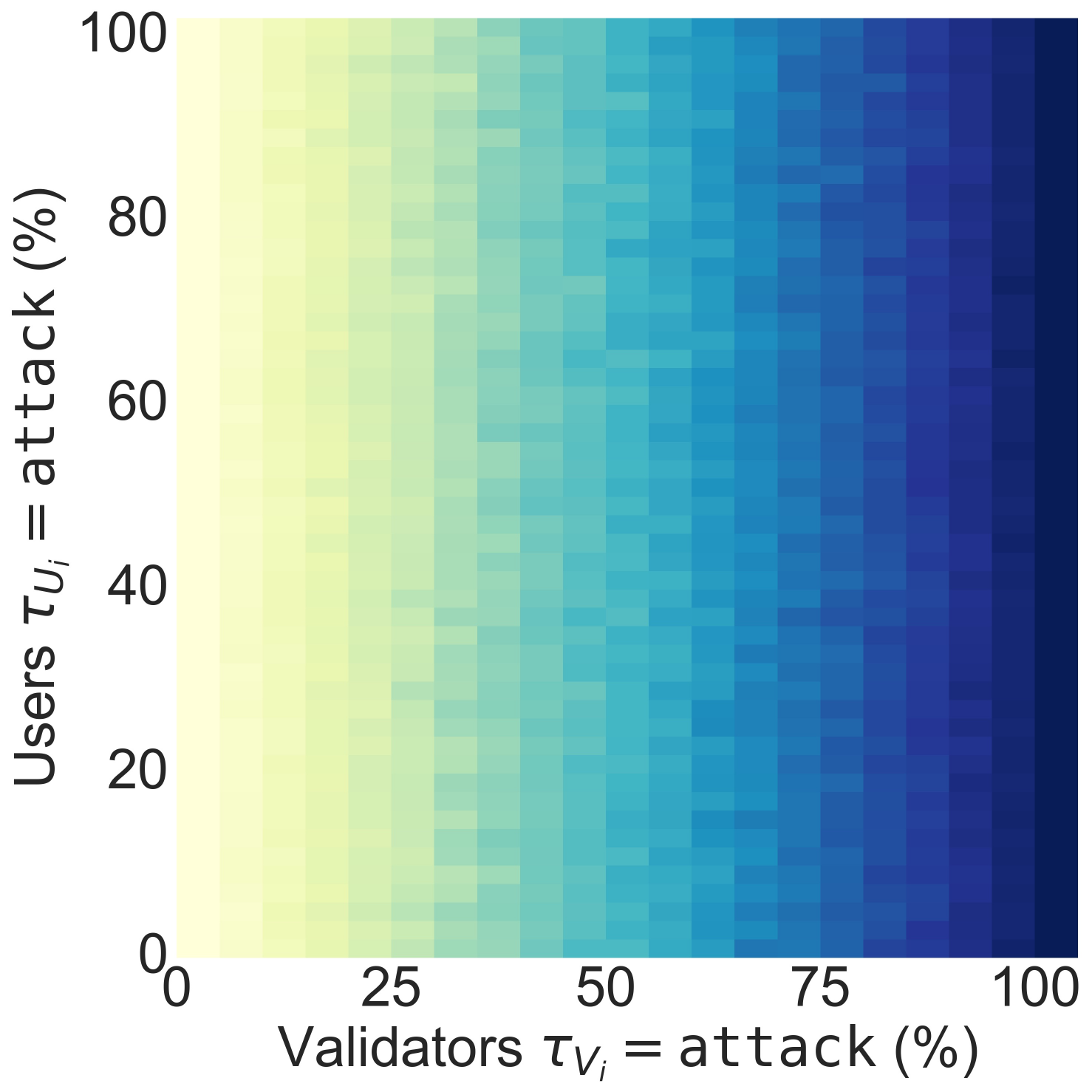}
        \caption{PoS.}
        \label{fig:pos_cumu}
    \end{subfigure}\hfill
    \begin{subfigure}[t]{0.45\textwidth}
        \centering
        \includegraphics[height=0.25\textheight]{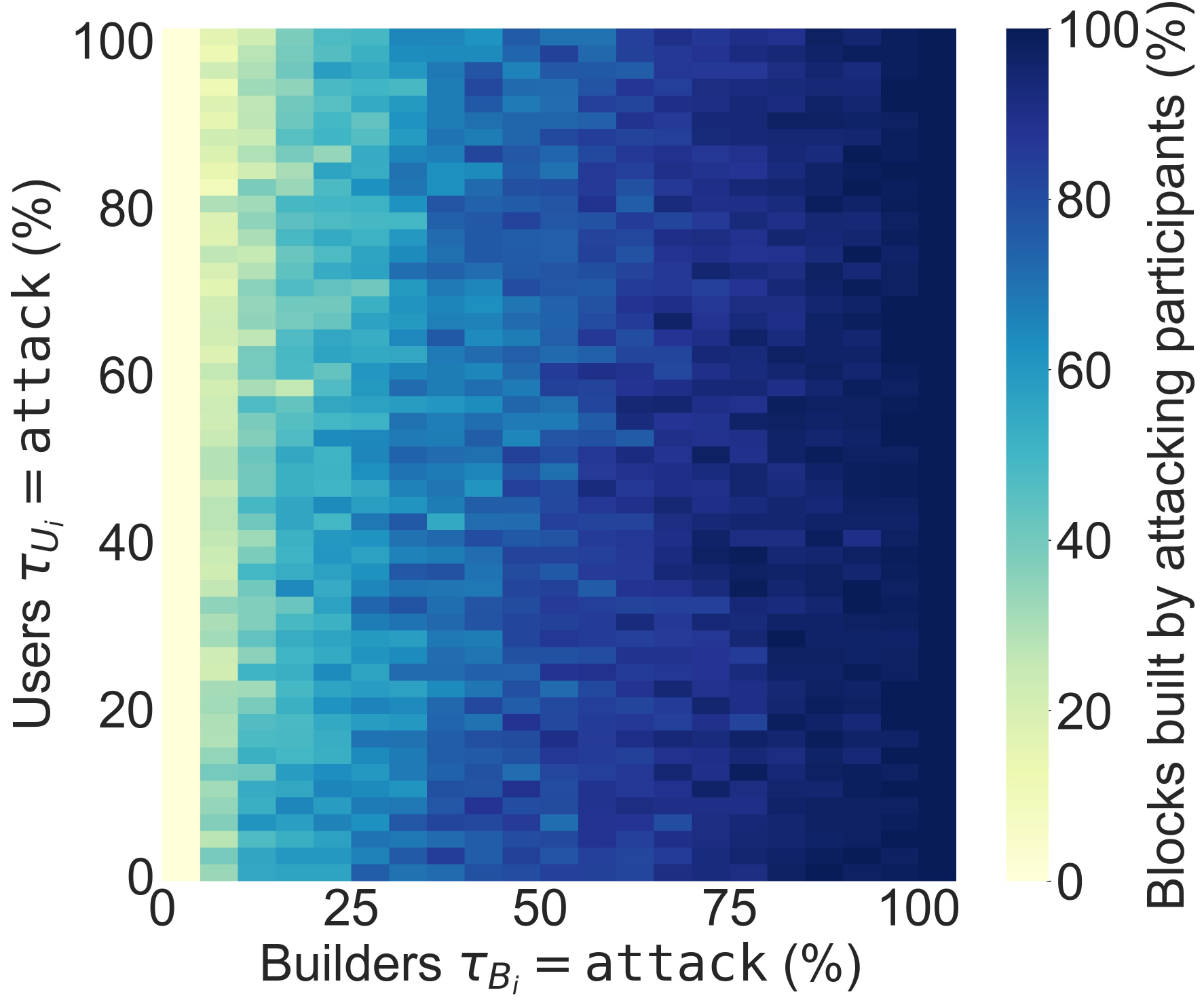}
        \caption{ePBS.}
        \label{fig:pbs_cumu}
    \end{subfigure}
    \caption{Comparison of block production by attacking and benign participants.}
    \label{fig:side_by_side}
\end{figure}

From \autoref{fig:pbs_cumu} and \autoref{fig:pos_cumu}, we observe clear differences in block production patterns. Under \ac{epbs}, the share of blocks produced by \ac{mev}-seeking builders increases steadily with their presence in the system, reflecting their competitive advantage and consistent dominance. In contrast, under \ac{pos}, block production remains evenly distributed across validators, consistent with uniform random selection.  

The figures also show that the proportion of user attackers has little effect on which participant produces the final block, confirming proposition~\ref{prop:user_mev_invariance} that user-initiated transactions do not influence auction outcomes. Overall, \ac{epbs} amplifies centralization among \ac{mev}-seeking builders, while \ac{pos} preserves a more balanced distribution. This aligns with our theoretical results in \autoref{thm:centralization} and answers \textbf{RQ1.1}.

\subsubsection{Profit Distribution between PoS and ePBS}
\label{sec:profit_dis}

\begin{figure}[t]
    \centering

    \begin{subfigure}[b]{0.9\textwidth}
        \centering
            \includegraphics[width=\linewidth]{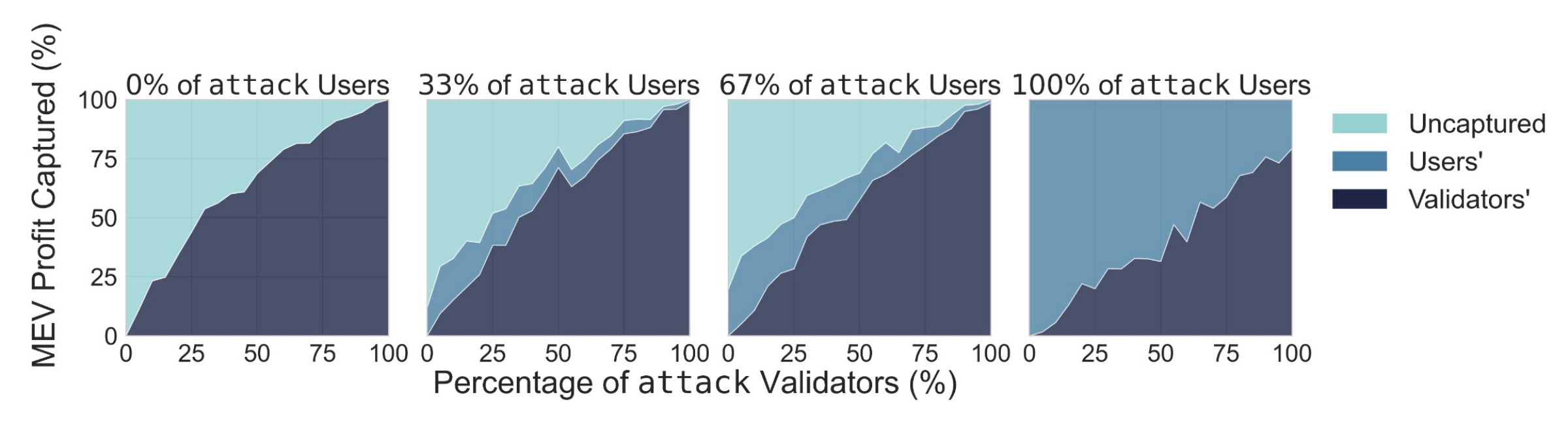}
        \caption{PoS.}
        \label{fig:pos_mev_reward_dis}
    \end{subfigure}

    \vspace{1em}

    \begin{subfigure}[b]{0.9\textwidth}
        \centering
            \includegraphics[width=\linewidth]{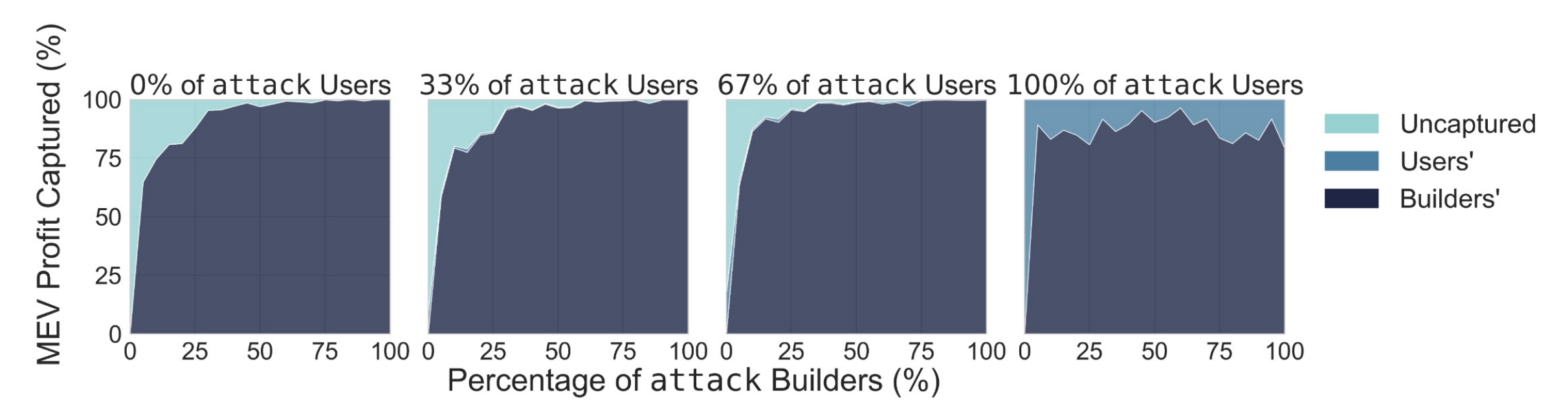}
        \caption{ePBS.}
        \label{fig:mev_reward_dis}
    \end{subfigure}

    \caption{Comparison of the percentage of MEV profit distribution among users, validators, and uncaptured MEV under varying numbers of attacking users. }
    \label{fig:mev_distribution_side_by_side}
\end{figure}

\autoref{fig:mev_reward_dis} and \autoref{fig:pos_mev_reward_dis} show the distribution of realized \ac{mev} profits among users and block producers. User-extracted \ac{mev} is given by the sum over \( m_k \) where \( x_k \in \mathcal{T}_{w,T} \) and \( \exists\, x_j \in \mathbf{x}_{U_i,T} \) such that \( x_k = \mathcal{A}(x_j) \). Builder- or validator-extracted \ac{mev} corresponds to \( x_j \in \mathbf{x}_{U_i,T} \) or \( x_j \in \mathbf{x}_{V_i,T} \), matching the second term in \autoref{eq:builder_valuation}. Note that the plot excludes the proposer's share of builder profits, as proposers typically capture nearly the entire block value. Including it would misleadingly suggest that all builder profits are attributed to proposers.

In \ac{epbs} (\autoref{fig:mev_reward_dis}), most \ac{mev} is captured by builders, with builder-initiated attacks consistently outweighing user-initiated ones. This dominance persists across all levels of user attacks. By contrast, under \ac{pos} (\autoref{fig:pos_mev_reward_dis}), user-extracted \ac{mev} rises sharply as more users attack, while validator share declines.  

These results answer \textbf{RQ1.2}: \ac{epbs} disproportionately rewards builders, reinforcing their dominance, whereas \ac{pos} yields a more balanced division of \ac{mev} between users and validators.

\subsubsection{Bidding Dynamics Between Builders}
\label{sec:bid_sim}

Having established stronger centralization of profit and content under \ac{epbs}, we now examine the underlying driver: the block auction.  

In this setting, half of the users submit attack transactions, half of the builders are \ac{mev}-seeking (\(\tau_{B_i} = \mathtt{attack}\)), and bidding strategies are split: one-quarter of builders adopt a last-minute strategy, while three-quarters follow a reactive strategy.

\begin{figure}[ht!]
    \centering
    \begin{minipage}[t]{0.4\textwidth}
        \centering
        \includegraphics[width=\linewidth]{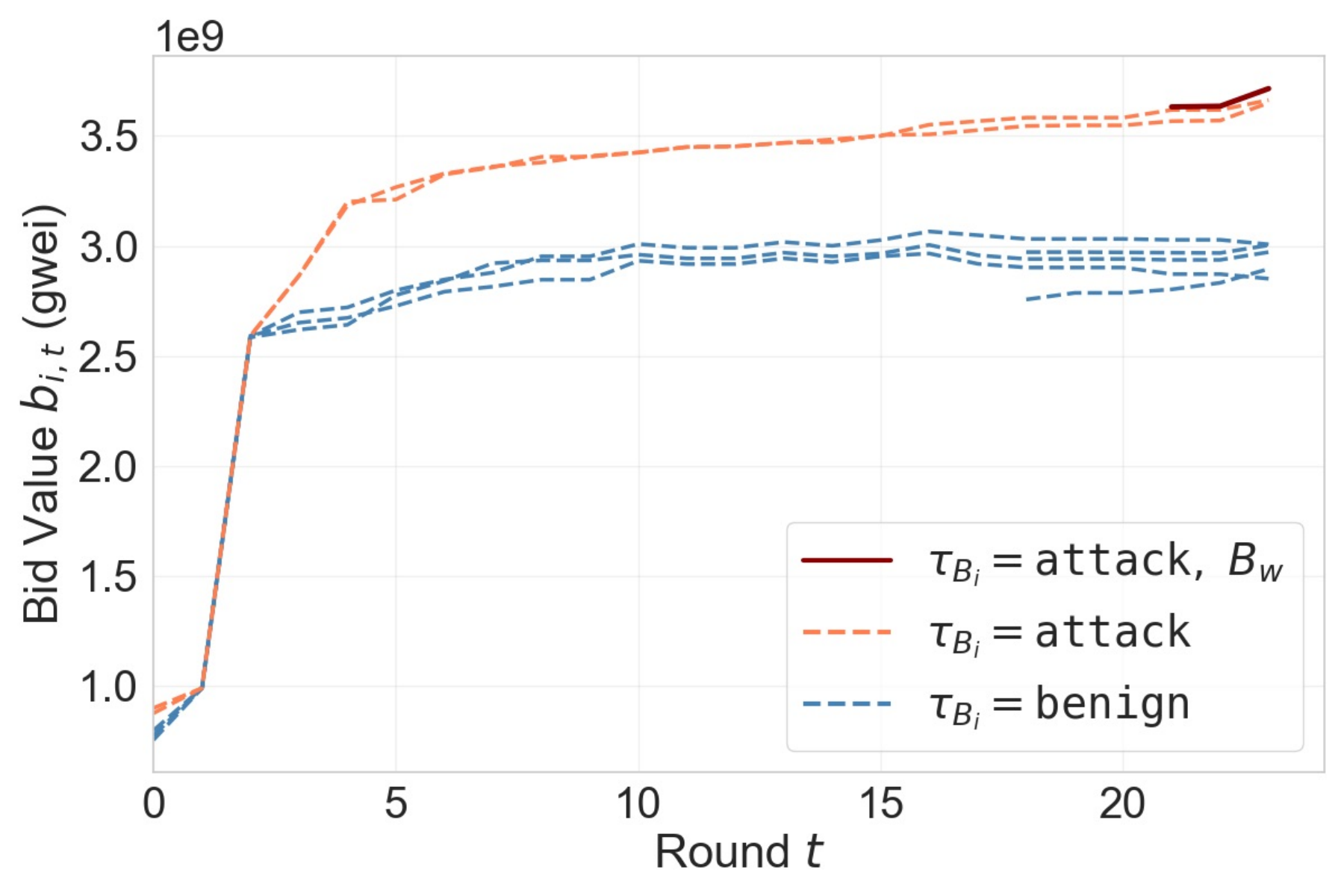}
        \caption{Auction bids over time.}
        \label{fig:bid_dynamic}
    \end{minipage}
    \hspace{1em} 
    \begin{minipage}[t]{0.4\textwidth}
        \centering
        \includegraphics[width=\linewidth]{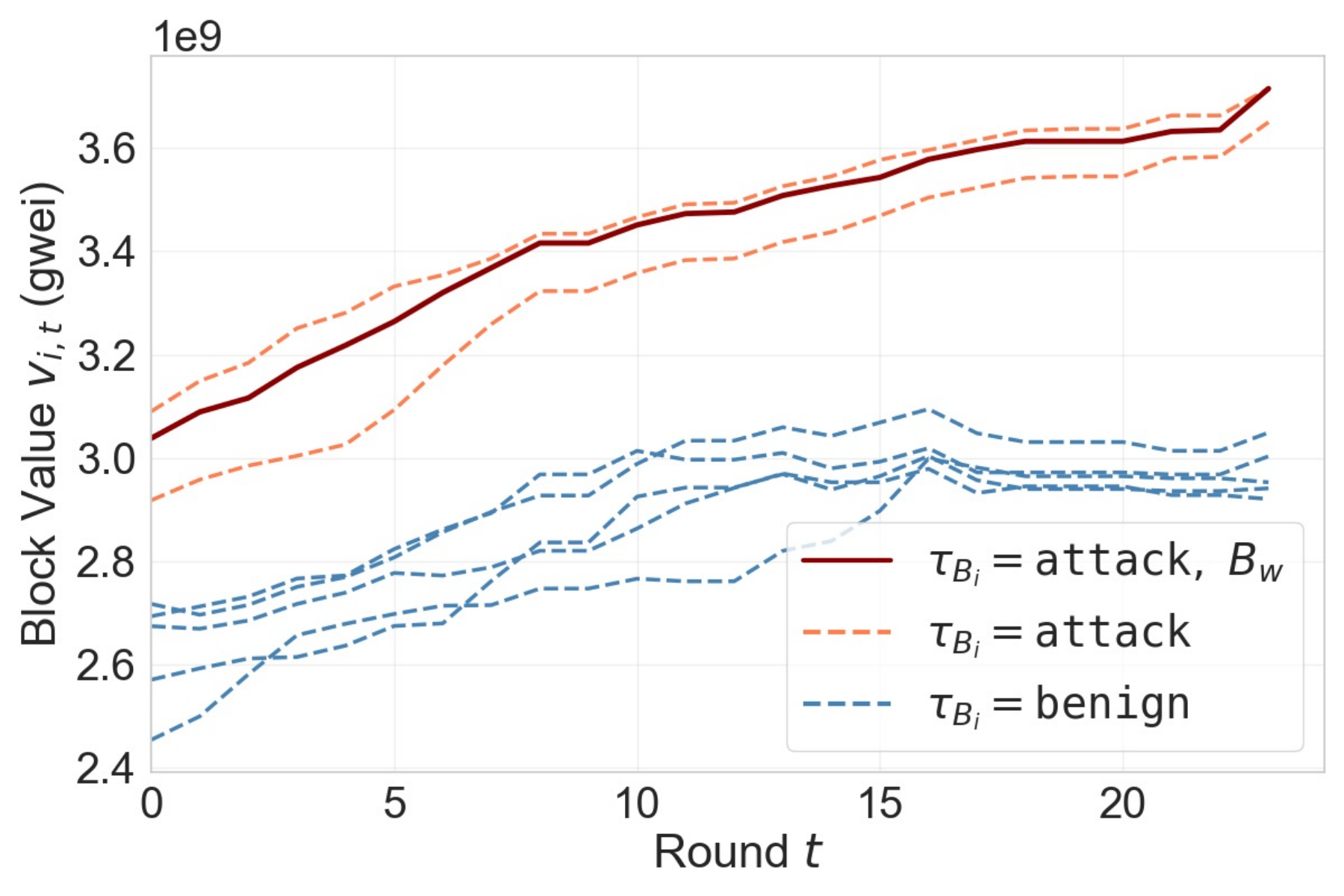}
        \caption{Block valuations \( v_{i,t} \) over rounds.}
        \label{fig:bv_dynamic}
    \end{minipage}

\end{figure}

As shown in \autoref{fig:bv_dynamic}, builders' valuations \( v_{i,t} \) grow over rounds as their mempools \( \mathcal{M}_{B_i,t} \) evolve, enabling more aggressive bidding \( b_{i,t} = \sigma_i(v_{i,t}, H_{i,t}) \), reflected in \autoref{fig:bid_dynamic}. Over 1{,}000 simulated blocks, the highest-valuation builder wins in 97\% of cases, consistent with \autoref{thm:auction_equilibrium}. On average:
\(
b_{(1),T} \approx 0.954 \, v_{(1),T}, 
\quad 
b_{(1),T} \approx 1.0004 \, v_{(2),T},
\)
where \( b_{(1),T} \) is the winning bid, \( v_{(1),T} \) the top valuation, and \( v_{(2),T} \) the second-highest. This confirms near second-price auction outcomes predicted by theory.  

These results answer \textbf{RQ1.3}. By \autoref{thm:auction_equilibrium}, when bid propagation latency is sufficiently small, the ePBS auction converges to the second-highest bid. In our simulations, latency prevents full convergence, but the winning bid remains very close to the second-highest valuation, leading to proposers capturing nearly 95\% of block value while builder profits remain highly unequal (Gini = 0.8358). By contrast, randomized slot assignment yields a much more even distribution of proposer revenue (Gini = 0.1749).

\subsubsection{Restaking and Vertical Integration}
\label{sec:restaking}

To study the \enquote{rich get richer} effect beyond single-block dynamics, we extend the analysis to 10,000 slots with all participants reinvesting profits. Builders, proposers, and validators restake, creating vertical integration. Initial stakes are distributed so that most participants hold one validator, with a few starting at higher levels.

\begin{figure}[tb]
    \centering
    \begin{subfigure}[t]{0.32\textwidth}
        \centering
        \includegraphics[width=\linewidth]{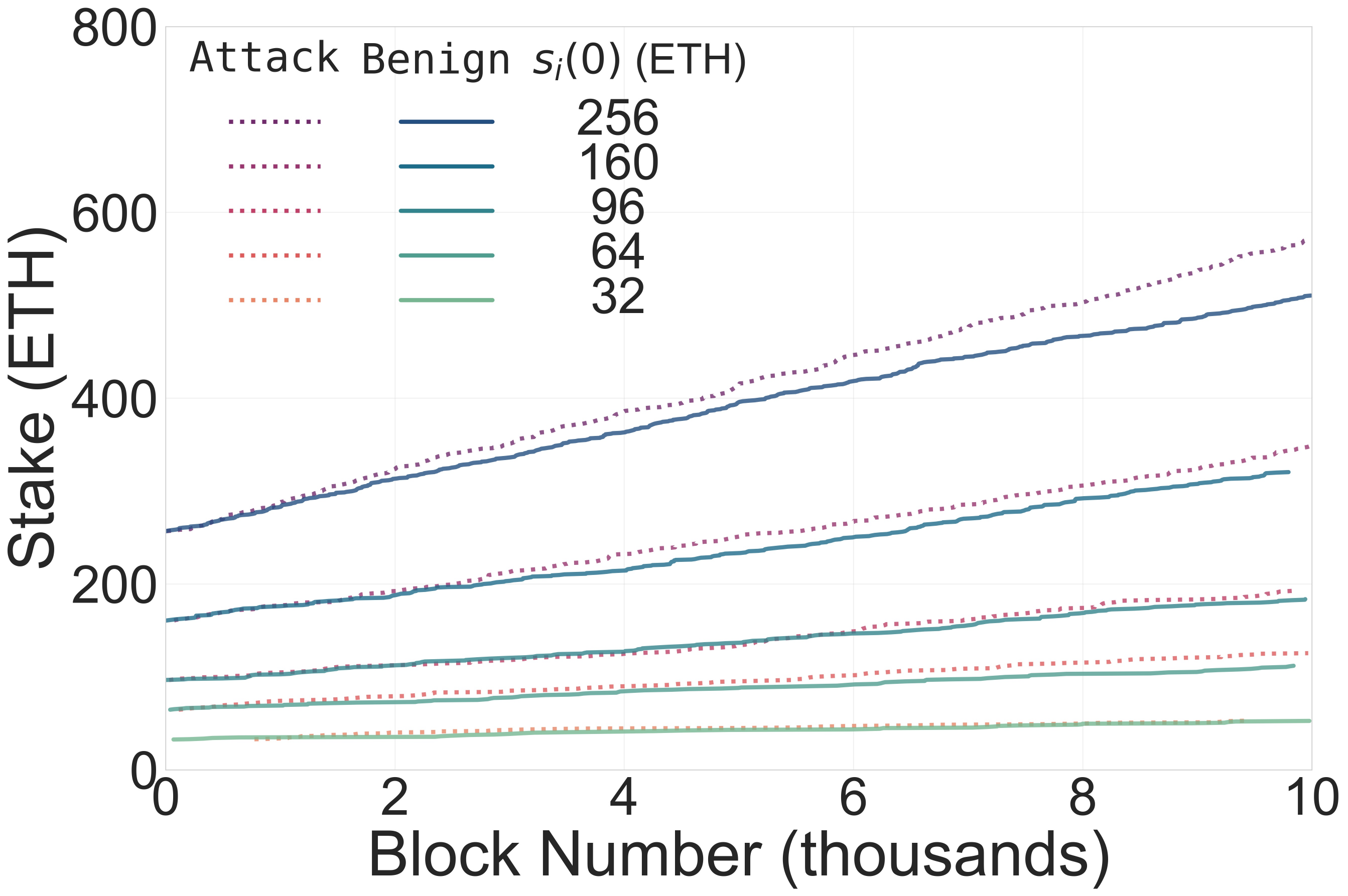}
        \caption{PoS Validator.}
        \label{fig:validator_stake}
    \end{subfigure}
    \begin{subfigure}[t]{0.32\textwidth}
        \centering
        \includegraphics[width=\linewidth]{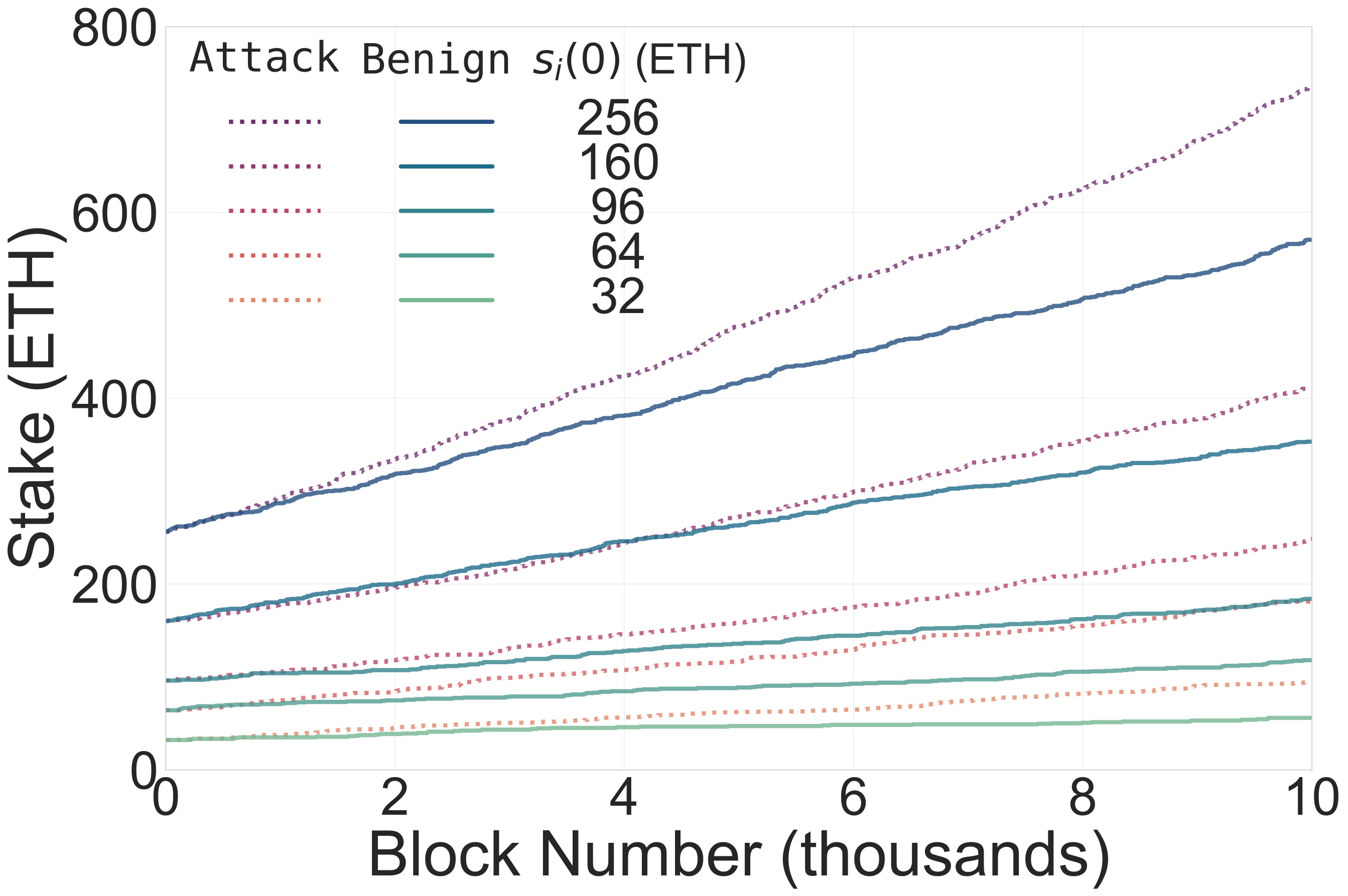}
        \caption{ePBS Builder.}
        \label{fig:builder_stake}
    \end{subfigure}
    \begin{subfigure}[t]{0.32\textwidth}
        \centering
        \includegraphics[width=\linewidth]{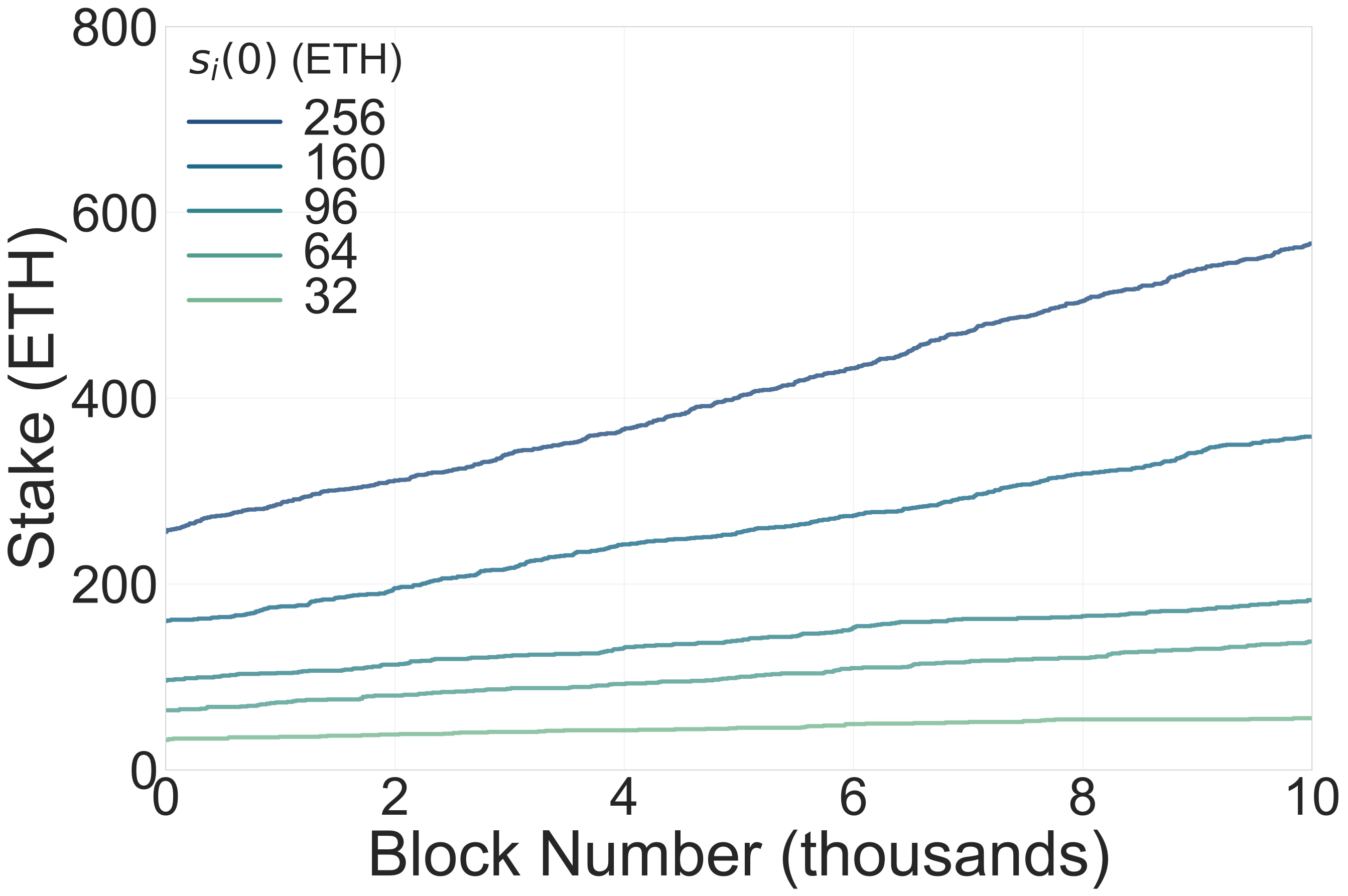}
        \caption{ePBS Proposer.}
        \label{fig:proposer_stake}
    \end{subfigure}
    \caption{Stake evolution over 10,000 blocks for participants with different initial stakes.}

    \label{fig:stake}
\end{figure}

\autoref{fig:stake} shows the resulting stake dynamics. Under \ac{pos}, validators grow steadily and higher-stake participants compound faster. Under \ac{epbs}, builders grow much more sharply thanks to dual revenues from auctions and block proposals, which widen the gap between $\mathtt{attack}$ and $\mathtt{benign}$ builders. Proposers grow at a steadier pace. In all cases, a higher initial stake leads to faster growth.

Quantitatively, participants starting with 256 ETH show distinct acceleration patterns. Attack validators reduce the blocks needed to earn 100 ETH from 2811 to 1761, while benign validators slow from 3366 to 3887. \ac{epbs} builders accelerate most: attack builders shorten from 2517 to 1664 blocks, benign builders from 3141 to 2377. Proposers instead slow slightly, from 3112 to 3304, as their relative stake declines against faster-growing builders.

These results show consistency with \autoref{sec:reinvest}, that restaking amplifies centralization more in \ac{epbs} than in \ac{pos}. Under \ac{epbs}, builders' integrated revenue streams drive faster accumulation than in \ac{pos}, deepening long-term profit and content concentration.

\subsection{RQ2 Answer: Effect of \ac{epbs} on \ac{mev} Transaction Ordering}
\label{sec:inversion}

We analyze how \ac{epbs} affects transaction ordering, since \ac{mev} activities undermine fairness and efficiency. To measure reordering, we compute the inversion count, i.e., the number of swaps needed to sort a block's transactions by creation time.

\begin{figure}[tb]
    \centering
    \begin{subfigure}[t]{0.48\textwidth}
        \centering
        \includegraphics[height=0.25\textheight]{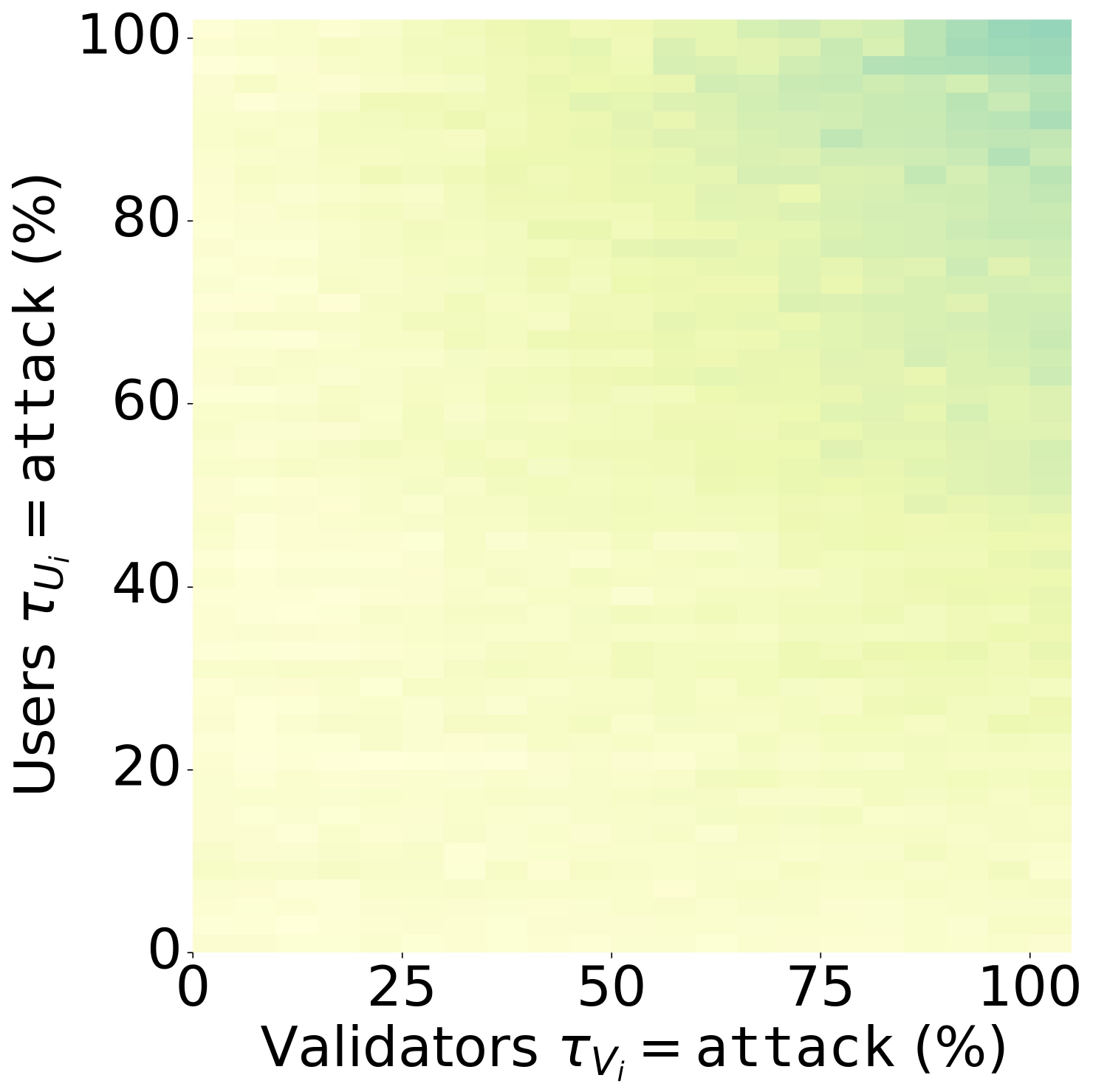}
        \caption{PoS.}
        \label{fig:inversion_pos}
    \end{subfigure}\hfill
    \begin{subfigure}[t]{0.48\textwidth}
        \centering
        \includegraphics[height=0.25\textheight]{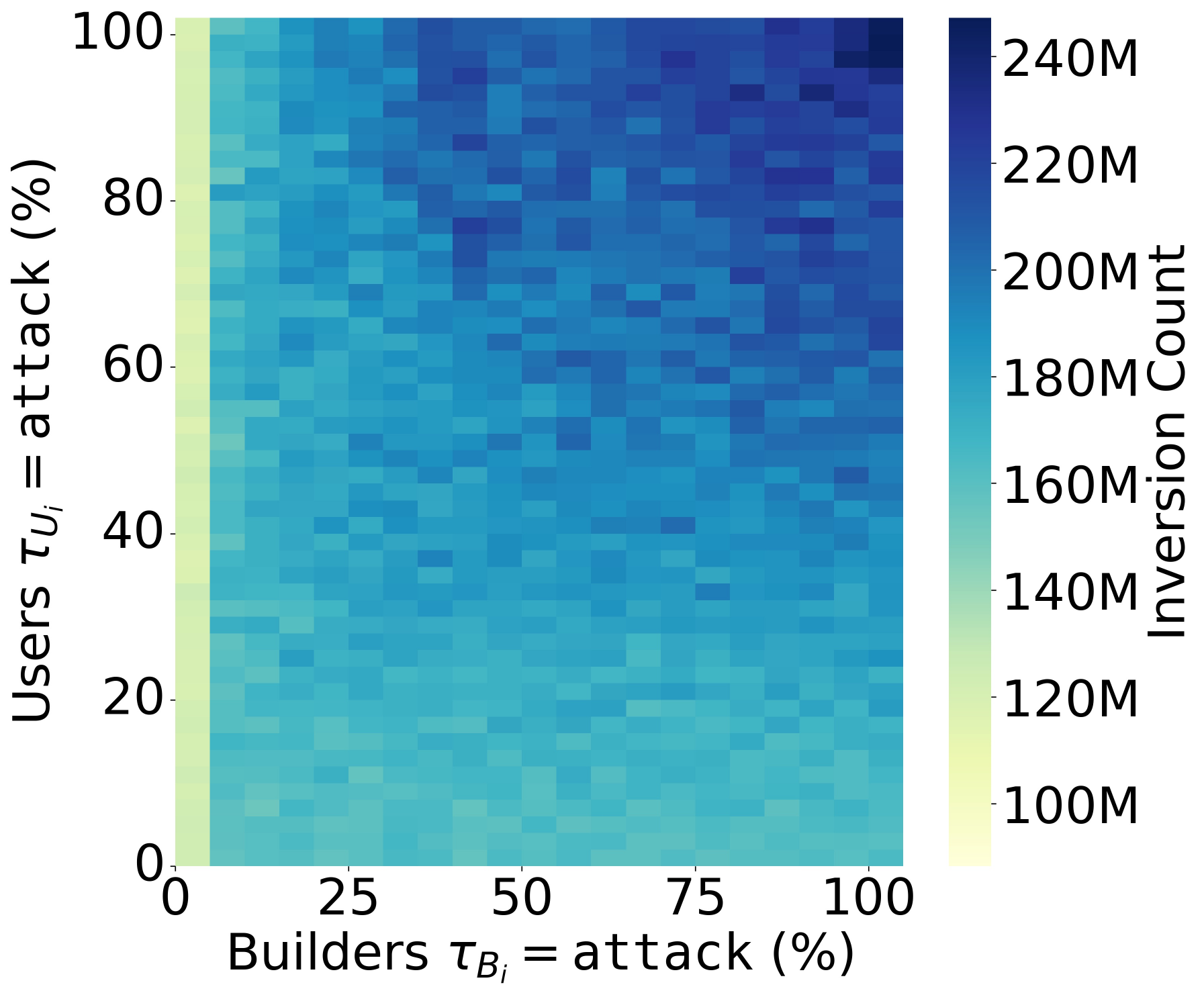}
        \caption{ePBS.}
        \label{fig:inversion_pbs}
    \end{subfigure}
    \caption{Inversion count heatmaps under varying numbers of \ac{mev}-seeking participants.}
    \label{fig:inversion}
\end{figure}

\autoref{fig:inversion} shows that more attacking participants lead to higher inversion counts in both systems, but the effect is much stronger under \ac{epbs}. In particular, \autoref{fig:inversion_pbs} indicates that the auction incentivizes aggressive reordering strategies for \ac{mev} capture, while \autoref{fig:inversion_pos} remains closer to chronological ordering.  

These findings answer \textbf{RQ2}: \ac{epbs} amplifies transaction reordering because \ac{mev}-seeking builders raise block valuations and bids by altering order. This aligns with \autoref{sec:pbs_pos_mev_expectation} and highlights added risks for fairness and decentralization.

\section{Conclusion}
\label{sec:conclusion}

Our work shows that \ac{epbs} amplifies centralization rather than mitigating it. Builder profits become highly concentrated (Gini coefficient rising from 0.1749 under \ac{pos} to 0.8358 under \ac{epbs}), and 95.4\% of block value accrues to proposers despite their limited role in block assembly. This concentration arises because \ac{mev}-seeking builders consistently outbid the benign ones, and the auction mechanism incentivizes aggressive \ac{mev} extraction.

From the builder's perspective, both profit and content building centralize sharply, with a few efficient builders dominating block construction. Proposers, by contrast, experience more even profit distribution due to randomized slot assignment, leading to partial decentralization at their level. Together, \ac{epbs} redistributes rewards but fails to decentralize block content. On transaction ordering, \ac{epbs} increases the frequency of reordering compared to \ac{pos}, as builders exploit \ac{mev} opportunities to raise bids. This dynamic strengthens the position of \ac{mev}-seeking builders, undermining fairness and further concentrating power.  

Overall, while \ac{epbs} was designed to improve decentralization, our results show that it intensifies builder dominance and transaction manipulation. Future research should investigate alternative mechanisms and designs that limit builder concentration without restoring proposer centralization.

\bibliography{references_revised1}

@inproceedings{Weintraub2022,
    title = {{A Flash(bot) in the Pan: Measuring Maximal Extractable Value in Private Pools}},
    year = {2022},
    booktitle = {IMC '22: Proceedings of the 22nd ACM Internet Measurement Conference},
    author = {Weintraub, Ben and Torres, Christof Ferreira and Nita-Rotaru, Cristina and State, Radu},
    month = {10},
    pages = {458--471},
    publisher = {},
    url = {https://dl.acm.org/doi/10.1145/3517745.3561448},
    isbn = {9781450392594}
}

@article{Carbonneau2016,
    title = {{A Multi-attribute bidding strategy for a single-attribute auction marketplace}},
    year = {2016},
    journal = {Expert Systems with Applications},
    author = {Carbonneau, Réal and Vahidov, Rustam},
    month = {1},
    pages = {42--50},
    volume = {43},
    publisher = {Elsevier Ltd},
    url = {https://www.sciencedirect.com/science/article/pii/S0957417415005965},
    doi = {10.1016/j.eswa.2015.08.039},
    issn = {09574174},
    keywords = {Auctions, Concession-making, Electronic negotiations, Software agents, eBay}
}

@techreport{Zust2021,
    title = {{Analyzing and Preventing Sandwich Attacks in Ethereum}},
    year = {2021},
    author = {Z{\"{u}}st, Patrick and Nadahalli, Tejaswi and Wang Roger Wattenhofer, Ye},
    url = {https://pub.tik.ee.ethz.ch/students/2021-FS/BA-2021-07.pdf}
}

@article{Jayabalasamy2024,
    title = {{Application of Graph Theory for Blockchain Technologies}},
    year = {2024},
    journal = {Mathematics},
    author = {Jayabalasamy, Guruprakash and Pujol, Cyril and Latha Bhaskaran, Krithika},
    number = {8},
    month = {4},
    volume = {12},
    publisher = {Multidisciplinary Digital Publishing Institute (MDPI)},
    url = {https://www.mdpi.com/2227-7390/12/8/1133},
    doi = {10.3390/math12081133},
    issn = {22277390},
    keywords = {blockchain, graph model, graph theory}
}

@article{Wahrstatter_Censorship,
    title = {{Blockchain Censorship}},
    year = {2024},
    journal = {WWW '24: Proceedings of the ACM Web Conference 2024},
    author = {Wahrst{\"{a}}tter, Anton and Ernstberger, Jens and Yaish, Aviv and Zhou, Liyi and Qin, Kaihua and Tsuchiya, Taro and Steinhorst, Sebastian and Svetinovic, Davor and Christin, Nicolas and Barczentewicz, Mikolaj and Gervais, Arthur},
    month = {5},
    url = {https://dl.acm.org/doi/10.1145/3589334.3645431},
    arxivId = {2305.18545}
}

@article{Babu2022,
    title = {{Blockchain MEV minimisation solution with price guarantee}},
    year = {2022},
    journal = {TechRxiv},
    author = {Pillai, Babu},
    url = {https://doi.org/10.36227/techrxiv.21345306.v1}
}

@inproceedings{Bahrani2024,
    title = {{Centralization in Block Building and Proposer-Builder Separation}},
    year = {2024},
    booktitle = {International Conference on Financial Cryptography and Data Security},
    author = {Bahrani, Maryam and Garimidi, Pranav and Roughgarden, Tim},
    url = {https://link.springer.com/chapter/10.1007/978-3-031-78676-1_19},
    arxivId = {2401.12120v1}
}

@inproceedings{Yang2024,
    title = {{Decentralization of Ethereum's Builder Market}},
    year = {2024},
    booktitle = {2025 IEEE Symposium on Security and Privacy (SP)},
    author = {Yang, Sen and Nayak, Kartik and Zhang, Fan},
    month = {5},
    url = {https://ieeexplore.ieee.org/document/11023282},
    arxivId = {2405.01329}
}

@inproceedings{Huang2021,
    title = {{Do the Rich Get Richer? Fairness Analysis for Blockchain Incentives}},
    year = {2021},
    booktitle = {Proceedings of the ACM SIGMOD International Conference on Management of Data},
    author = {Huang, Yuming and Tang, Jing and Cong, Qianhao and Lim, Andrew and Xu, Jianliang},
    pages = {790--803},
    publisher = {Association for Computing Machinery},
    doi = {10.1145/3448016.3457285},
    issn = {07308078},
    arxivId = {2103.14713},
    keywords = {PoS, PoW, blockchain, fairness, incentive}
}

@misc{eip7732,
    title = {{EIP-7732: Enshrined Proposer-Builder Separation}},
    year = {2024},
    author = {{Ethereum Improvement Proposals}},
    url = {https://eips.ethereum.org/EIPS/eip-7732}
}

@misc{eip7732discussion,
    title = {{EIP-7732: Enshrined Proposer-Builder Separation (ePBS)}},
    year = {2024},
    author = {{Ethereum Magicians}}
}

@article{Ranganatha2025,
    title = {{Enhancing fraud detection efficiency in mobile transactions through the integration of bidirectional 3d Quasi-Recurrent Neural network and blockchain technologies}},
    year = {2025},
    journal = {Expert Systems with Applications},
    author = {Ranganatha, H. R. and Syed Mustafa, A.},
    month = {1},
    volume = {260},
    publisher = {Elsevier Ltd},
    url = {https://www.sciencedirect.com/science/article/pii/S0957417424020463},
    doi = {10.1016/j.eswa.2024.125179},
    issn = {09574174},
    keywords = {Anomaly detection, Bitcoin, Blockchain technology, Deep learning, Fraud detection, Mobile transactions}
}

@unpublished{Grandjean2023,
    title = {{Ethereum Proof-of-Stake Consensus Layer: Participation and Decentralization}},
    year = {2023},
    author = {Grandjean, Dominic and Heimbach, Lioba and Wattenhofer, Roger},
    month = {6},
    url = {http://arxiv.org/abs/2306.10777},
    arxivId = {2306.10777}
}

@article{Heimbach2023,
    title = {{Ethereum's Proposer-Builder Separation: Promises and Realities}},
    year = {2023},
    journal = {IMC '23: Proceedings of the 2023 ACM on Internet Measurement Conference},
    author = {Heimbach, Lioba and Kiffer, Lucianna and Torres, Christof Ferreira and Wattenhofer, Roger},
    month = {5},
    url = {https://dl.acm.org/doi/10.1145/3618257.3624824},
    arxivId = {2305.19037}
}

@unpublished{Daian2019,
    title = {{Flash Boys 2.0: Frontrunning, Transaction Reordering, and Consensus Instability in Decentralized Exchanges}},
    year = {2019},
    author = {Daian, Philip and Goldfeder, Steven and Kell, Tyler and Li, Yunqi and Zhao, Xueyuan and Bentov, Iddo and Breidenbach, Lorenz and Juels, Ari},
    month = {4},
    url = {http://arxiv.org/abs/1904.05234},
    arxivId = {1904.05234}
}

@misc{flashbot,
    title = {{Flashbot Docs}},
    year = {2024},
    booktitle = {https://docs.flashbots.net/flashbots-auction/overview},
    author = {{Flashbot}},
    url = {https://docs.flashbots.net/flashbots-auction/overview}
}

@misc{Buterin2022,
    title = {{How much can we constrain builders without bringing back heavy burdens to proposers?}},
    year = {2022},
    booktitle = {https://ethresear.ch/t/how-much-can-we-constrain-builders-without-bringing-back-heavy-burdens-to-proposers/13808},
    author = {Buterin, Vitalik},
    url = {https://ethresear.ch/t/how-much-can-we-constrain-builders-without-bringing-back-heavy-burdens-to-proposers/13808}
}

@inproceedings{Chitra2022,
    title = {{Improving Proof of Stake Economic Security via MEV Redistribution}},
    year = {2022},
    booktitle = {Proceedings of the 2022 ACM CCS Workshop on Decentralized Finance and Security},
    author = {Chitra, Tarun and Kulkarni, Kshitij},
    month = {11},
    pages = {1--7},
    publisher = {Association for Computing Machinery, Inc},
    url = {https://dl.acm.org/doi/10.1145/3560832.3564259},
    isbn = {9781450385404},
    keywords = {cryptocurrencies, economic security, proof of stake systems}
}

@misc{ethmev,
    title = {{MAXIMAL EXTRACTABLE VALUE (MEV)}},
    year = {2023},
    booktitle = {https://ethereum.org/en/developers/docs/mev/},
    author = {{Corwin Smith}},
    url = {https://ethereum.org/en/developers/docs/mev/}
}

@misc{mev_distribution,
    title = {{Modelling Realised Extractable Value in Proof of Stake Ethereum}},
    year = {2022},
    booktitle = {https://collective.flashbots.net/t/modelling-realised-extractable-value-in-proof-of-stake-ethereum/290},
    author = {{Elainehu}},
    url = {https://collective.flashbots.net/t/modelling-realised-extractable-value-in-proof-of-stake-ethereum/290}
}

@unpublished{Jensen2023,
    title = {{Multi-block MEV}},
    year = {2023},
    author = {Jensen, Johannes Rude and von Wachter, Victor and Ross, Omri},
    month = {3},
    url = {http://arxiv.org/abs/2303.04430},
    arxivId = {2303.04430}
}

@inproceedings{zhou_arbitrage,
    title = {{On the just-in-time discovery of profit-generating transactions in DeFi Protocols}},
    year = {2021},
    booktitle = {2021 IEEE Symposium on Security and Privacy},
    author = {Zhou, Liyi and Qin, Kaihua and Cully, Antoine and Livshits, Benjamin and Gervais, Arthur},
    month = {5},
    pages = {919--936},
    volume = {2021-May},
    publisher = {Institute of Electrical and Electronics Engineers Inc.},
    url = {https://www.computer.org/csdl/proceedings-article/sp/2021/893400b955/1t0x9Rx2te8},
    isbn = {9781728189345},
    issn = {10816011},
    arxivId = {2103.02228}
}

@inproceedings{Wang2024,
    title = {{Private Order Flows and Builder Bidding Dynamics: The Road to Monopoly in Ethereum's Block Building Market}},
    year = {2025},
    booktitle = {WWW '25: Proceedings of the ACM on Web Conference 2025},
    author = {Wang, Shuzheng and Huang, Yue and Zhang, Wenqin and Huang, Yuming and Wang, Xuechao and Tang, Jing},
    month = {10},
    url = {https://doi.org/10.1145/3696410.3714754},
    arxivId = {2410.12352}
}

@misc{pbsdef,
    title = {{Proposer-builder separation}},
    year = {2024},
    booktitle = {https://ethereum.org/nl/roadmap/pbs/},
    author = {{Pagina}},
    url = {https://ethereum.org/nl/roadmap/pbs/}
}

@unpublished{Capponi2024,
    title = {{Proposer-Builder Separation, Payment for Order Flows, and Centralization in Blockchain}},
    year = {2024},
    author = {Capponi, Agostino and Jia, Ruizhe and Olafsson, Sveinn},
    url = {https://ssrn.com/abstract=4723674}
}

@article{Qin2021,
    title = {{Quantifying Blockchain Extractable Value: How dark is the forest?}},
    year = {2022},
    journal = {2022 IEEE Symposium on Security and Privacy},
    author = {Qin, Kaihua and Zhou, Liyi and Gervais, Arthur},
    month = {1},
    url = {https://ieeexplore.ieee.org/document/9833734/},
    arxivId = {2101.05511}
}

@misc{2subsec,
    title = {{Relay API Documentation}},
    year = {2024},
    booktitle = {https://docs.flashbots.net/flashbots-mev-boost/relay},
    author = {{Flashbot}},
    url = {https://docs.flashbots.net/flashbots-mev-boost/relay}
}

@inproceedings{yang2022,
    title = {{SoK: MEV Countermeasures: Theory and Practice}},
    year = {2022},
    booktitle = {DeFi '24: Proceedings of the Workshop on Decentralized Finance and Security},
    author = {Yang, Sen and Zhang, Fan and Huang, Ken and Chen, Xi and Yang, Youwei and Zhu, Feng},
    month = {12},
    url = {https://dl.acm.org/doi/10.1145/3689931.3694911},
    arxivId = {2212.05111}
}

@article{Eskandari2019,
    title = {{SoK: Transparent Dishonesty: front-running attacks on Blockchain}},
    year = {2019},
    journal = {Financial Cryptography and Data Security},
    author = {Eskandari, Shayan and Moosavi, Seyedehmahsa and Clark, Jeremy},
    pages = {170--189},
    url = {http://arxiv.org/abs/1902.05164},
    arxivId = {1902.05164}
}

@misc{ethpbs,
    title = {{State of research: increasing censorship resistance of transactions under proposer/builder separation (PBS)}},
    year = {2021},
    author = {{Vitalik Buterin}},
    url = {https://notes.ethereum.org/@vbuterin/pbs_censorship_resistance}
}

@inproceedings{Wu2023,
    title = {{Strategic Bidding Wars in On-chain Auctions}},
    year = {2023},
    author = {Wu, Fei and Thiery, Thomas and Leonardos, Stefanos and Ventre, Carmine},
    month = {12},
    publisher = {Proceedings of the 6th edition of the IEEE International Conference on Blockchain and Cryptocurrency (ICBC 2024)},
    url = {https://ieeexplore.ieee.org/document/10634354/},
    arxivId = {2312.14510}
}

@article{Gupta2023,
    title = {{The centralizing effects of private order flow on proposer-builder separation}},
    year = {2023},
    journal = {5th Conference on Advances in Financial Technologies (AFT 2023)},
    author = {Gupta, Tivas and Pai, Mallesh M and Resnick, Max},
    month = {5},
    url = {http://arxiv.org/abs/2305.19150},
    arxivId = {2305.19150}
}

@misc{merge,
    title = {{The Merge}},
    year = {2024},
    booktitle = {https://ethereum.org/en/roadmap/merge/},
    author = {{Ethereum}},
    url = {https://ethereum.org/en/roadmap/merge/}
}

@misc{Leonardo2022,
    title = {{The risks of vertical integration in MEV-Boost}},
    year = {2022},
    booktitle = {Flashbot},
    author = {{Leonardo Arias Fonseca}},
    url = {https://collective.flashbots.net/t/the-risks-of-vertical-integration-in-mev-boost/235}
}

@article{Wahrstatter2023,
    title = {{Time to Bribe: Measuring Block Construction Market}},
    year = {2023},
    journal = {Preprint},
    author = {Wahrst{\"{a}}tter, Anton and Zhou, Liyi and Qin, Kaihua and Svetinovic, Davor and Gervais, Arthur},
    month = {5},
    url = {http://arxiv.org/abs/2305.16468},
    arxivId = {2305.16468}
}

@unpublished{Makarov2019,
    title = {{Trading and Arbitrage in Cryptocurrency Markets}},
    year = {2019},
    author = {Makarov, Igor and Schoar, Antoinette},
    url = {https://ssrn.com/abstract=3171204},
    keywords = {Arbitrage, Bitcoin, Capital Controls *, Cryptocurrencies, Price Impact}
}

@misc{Neuder2023_epbs,
    title = {{Why enshrine Proposer-Builder Separation? A viable path to ePBS}},
    year = {2023},
    author = {Neuder, Michael},
    url = {https://ethresear.ch/t/why-enshrine-proposer-builder-separation-a-viable-path-to-epbs/15710/1}
}

@article{Ntuala2026ADetection,
    title = {{A dual-layer GNN with economic penalty mechanisms for blockchain fraud detection}},
    year = {2026},
    journal = {Expert Systems with Applications},
    author = {Ntuala, Grace Mupoyi and Xia, Qi and Xia, Hu and Badjie, Ansu and Mukala, Patrick and Da Silva Tavares, Edson Eliezer and Gao, Jianbin and Ukwuoma, Chiagoziem C.},
    month = {3},
    pages = {130121},
    volume = {299},
    publisher = {Elsevier BV},
    url = {https://www.sciencedirect.com/science/article/pii/S0957417425037364},
    doi = {10.1016/j.eswa.2025.130121},
    issn = {09574174}
}

@article{Mcafee1987AuctionsBidding,
    title = {{Auctions and Bidding}},
    year = {1987},
    journal = {Journal of Economic Literature},
    author = {Mcafee, R Preston and Mcmillan, John},
    number = {2},
    pages = {699--738},
    volume = {25},
    url = {https://cramton.umd.edu/econ415/mcafee-mcmillan-auctions-and-bidding-jel-1987.pdf}
}

@misc{Neuder2023BidHarmful,
    title = {{Bid cancellations considered harmful}},
    year = {2023},
    author = {Neuder, Michael},
    url = {https://ethresear.ch/t/bid-cancellations-considered-harmful/15500?utm_source=chatgpt.com}
}

@inproceedings{Motepalli2025DecentralizationAdvancement,
    title = {{Decentralization in PoS Blockchain Consensus: Quantification and Advancement}},
    year = {2025},
    booktitle = {IEEE Transactions on Network and Service Management ( Volume: 22, Issue: 4, August 2025)},
    author = {Motepalli, Shashank and Jacobsen, Hans-Arno},
    month = {4},
    url = {https://ieeexplore.ieee.org/document/10965870/},
    arxivId = {2504.14351}
}

@misc{ThomasThiery2023EmpiricalBBPs,
    title = {{Empirical analysis of Builders’ Behavioral Profiles (BBPs)}},
    year = {2023},
    author = {{Thomas Thiery}},
    url = {https://ethresear.ch/t/empirical-analysis-of-builders-behavioral-profiles-bbps/16327/1}
}

@article{Jeong2026LiqBoost:Exchanges,
    title = {{LiqBoost: Enhancing liquidity provision for blockchain-based decentralized exchanges}},
    year = {2026},
    journal = {Expert Systems with Applications},
    author = {Jeong, Woojin and Park, Seongwan and Lee, Jaewook and Lee, Yunyoung},
    month = {2},
    volume = {297},
    publisher = {Elsevier Ltd},
    url = {https://www.sciencedirect.com/science/article/pii/S0957417425028465},
    doi = {10.1016/j.eswa.2025.129230},
    issn = {09574174},
    keywords = {Automated market maker (AMM), Blockchain, Decentralized exchange, Incentive analysis, Liquidity management, System modeling}
}

@misc{Beaconscan2025SkippedBlocks,
    title = {{Skipped Blocks}},
    year = {2025},
    author = {{Beaconscan}},
    url = {https://beaconscan.com/slots-skipped}
}

@inproceedings{Werner2022SoK:DeFi,
    title = {{SoK: Decentralized Finance (DeFi)}},
    year = {2022},
    booktitle = {AFT '22: Proceedings of the 4th ACM Conference on Advances in Financial Technologies},
    author = {Werner, Sam and Perez, Daniel and Gudgeon, Lewis and Klages-Mundt, Ariah and Harz, Dominik and Knottenbelt, William},
    month = {9},
    pages = {30--46},
    publisher = {Association for Computing Machinery (ACM)},
    url = {https://doi.org/10.1145/3558535.3559780}
}

@inproceedings{Li2021TopoShot:Transactions,
    title = {{TopoShot: Uncovering Ethereum's network topology leveraging replacement transactions}},
    year = {2021},
    booktitle = {IMC '21: Proceedings of the 21st ACM Internet Measurement Conference},
    author = {Li, Kai and Tang, Yuzhe and Chen, Jiaqi and Wang, Yibo and Liu, Xianghong},
    month = {11},
    pages = {302--319},
    publisher = {},
    url = {https://dl.acm.org/doi/10.1145/3487552.3487814},
    isbn = {9781450391290},
    arxivId = {2109.14794},
    keywords = {blockchain, ethereum transactions, network measurements, overlay networks}
}

@unpublished{Beccuti2025TowardsMarket,
    title = {{Towards a Formal Framework of the Ethereum Staking Market}},
    year = {2025},
    booktitle = {arXiv},
    author = {Beccuti, Juan and Chantramonklasri, Thunj and Hafner, Matthias and Oderbolz, Nicolas},
    month = {7},
    url = {http://arxiv.org/abs/2503.14385},
    arxivId = {2503.14385}
}

@article{Oz2024WhoWhy,
    title = {{Who Wins Ethereum Block Building Auctions and Why?}},
    year = {2024},
    author = {{\"{O}}z, Burak and Sui, Danning and Thiery, Thomas and Matthes, Florian},
    month = {7},
    url = {http://arxiv.org/abs/2407.13931},
    arxivId = {2407.13931}
}

\end{document}